\documentclass[11pt]{ip-journal} 
\usepackage{graphicx}
\usepackage{amsmath}
\usepackage{amsthm}
\usepackage{amssymb}
\usepackage{mdframed}

\newcommand{\rmd}{\mathrm{d}}
\newcommand{\p}{\partial}
\newcommand{\gb}{\bar{g}}
\newtheorem{theorem}{Theorem}
\newtheorem{proposition}[theorem]
{Proposition}
\newtheorem{lemma}{Lemma}
\newtheorem{corollary}{Corollary}
\newtheorem{definition}{Definition}

\title{The Isoperimetric Problem in \\ Riemannian Optical Geometry}
\author{Henri P. Roesch \& Marcus C. Werner}
\date{\today} 
\begin{document}
\begin{abstract}
In general relativity, spatial light rays of static spherically symmetric spacetimes are geodesics of surfaces in Riemannian optical geometry. In this paper, we apply results on the isoperimetric problem to show that length-minimizing curves subject to an area constraint are circles, and discuss implications for the photon spheres of Schwarzschild, Reissner-Nordstr\"{o}m, as well as continuous mass models solving the Tolman-Oppenheimer-Volkoff equation. Moreover, we derive an isopermetric inequality for gravitational lensing in Riemannian optical geometry, using curve-shortening flow and the Gauss-Bonnet theorem.
\end{abstract}

\maketitle

\section{Introduction}
An astronomically important effect of general relativity is the deflection of light due to the curvature of spacetime, known as gravitational lensing (e.g., \cite{SEF}). When this was first confirmed observationally by Eddington's eclipse expeditions -- whose centenary we celebrate this year -- it provided a crucial corroboration of Einstein's theory. 
\newline
\indent From a mathematical point of view, three geometrical frameworks are usually employed to study this effect\footnote{MCW is grateful to Kyoji Saito for encouraging and regularly attending the {\it Mathematics-Astronomy Seminar Series} (2011-2014) and the {\it Symposium on Gravity and Light} (2013) at Kavli IPMU, University of Tokyo, which was dedicated to such mathematical aspects of gravitational lensing theory. The present collaboration was initiated at the related MRC {\it The Mathematics of Gravity and Light} (2018) organized by the American Mathematical Society.}: null geodesics in 4-dimensional Lorentzian spacetimes (e.g., \cite{P}); the standard approximation of gravitational lensing in 3-space, with applications of Morse theory to image multiplicity and singularity theory to caustics (e.g., \cite{PLW}); and optical geometry, which we will employ in this paper.
\newline
\indent Optical geometry is defined by a 3-dimensional space whose geodesics are spatial light rays (not null geodesics), by Fermat's Principle. More precisely, a stationary spacetime has a timelike Killing vector field, and light rays in 3-space obtained by projecting along this vector field are curves which are geodesic with respect to a Finsler metric of Randers type (cf. \cite{GHWW}). For a static spacetime, the Killing vector field is also hypersurface-orthogonal, so that the spacetime metric may be written
\[
g=g_{00} \rmd t \otimes \rmd t+g_{ij}\rmd x^i \otimes \rmd x^j\,,
\]
thus yielding light rays which are geodesics of a spatial Riemannian metric
\begin{equation}
\gb=-\frac{g_{ij}}{g_{00}}\rmd x^i \otimes \rmd x^j\,,
\label{optmetric}
\end{equation}
called the optical metric. Moreover, if we restrict ourselves to the spherically symmetric case, then light rays will be geodesics in totally geodesic surfaces and we can consider, without loss of generality, the optical metric in the equatorial plane.
\newline
\indent Now it turns out that optical geometry provides a useful framework for gravitational lensing theory yielding, for example, topological criteria for image multiplicity as well as a method to derive the light deflection angle, using the Gauss-Bonnet theorem \cite{GW}. However, one aspect of optical geometry that has hitherto not been exploited, is the fact that the {\it length} of a geodesic in optical geometry, i.e. a light ray, is directly related to {\it time}, by construction (Fermat's Principle), and the time delay between gravitationally lensed images is an important observable.
\newline
\indent Thus in order to approach this problem here, we study geometrical constraints on the lengths of curves bounding areas in the optical geometry of static spherically symmetric spacetimes, in other words, a version of the isoperimetric problem. As is well known (for a detailed review, see e.g. \cite{O}), the length of a closed curve $\p A$ in the Euclidean plane bounding an area $A$ satisfies the standard isoperimetric inequality
\begin{equation}
|\p A|^2\geq 4\pi |A|\,,
\label{isoineq}
\end{equation}
where equality is obtained for {\it circles}, a fact whose discovery has historically been ascribed to Dido, Queen of Carthage\footnote{According to legend, discovered in the process of enclosing the maximum area for the new town centre Byrsa with a string made of hide. Vergil's {\it Aeneis} I 365-369 states, \begin{quote}
{\it Devenere locos, ubi nunc ingentia cernis \\ moenia surgentemque novae Karthaginis arcem, \\ mercatique solum, facti de nomine Byrsam, \\  taurino quantum possent circumdare tergo.}\end{quote} 
}, and is therefore also referred to as {\it Dido's theorem}.
\newline
\indent In section \ref{sec-dido1} of this paper, we begin extending Dido's theorem to the Riemannian optical geometry of static spherically symmetric spacetimes with a constructive proof for the optical geometry of the Schwarzschild solution. Next, we consider the implications of a more general theorem by Bray and Morgan \cite{BM} for Reissner-Nordstr\"{o}m and solutions of the Tolman-Oppenheimer-Volkoff equation. Then in section \ref{sec-ineq}, we proceed from the limiting case of Dido's theorem to derive an isoperimetric inequality applicable to gravitational lensing, where in contrast to (\ref{isoineq}) the enclosed area has Euler characteristic zero. This is established using curve shortening flow and the Gauss-Bonnet theorem.
\newline
\indent The Einstein convention for summation over repeated indices is used throughout this paper. On occasion, we will employ a prime for short to denote differentiation with respect to the radial coordinate. Finally, note also that we set the speed of light to unity, and $c$ denotes a constant radius. 

\section{Dido's theorem}
\label{sec-dido1}

\subsection{Schwarzschild}
We begin by recalling the optical metric of the Schwarzschild solution, in Schwarzschild coordinates $(t,r,\vartheta,\varphi)$. By spherical symmetry, a light ray may be regarded as being located   in the totally geodesic slice $\mathcal{S}$ of the equatorial plane with $\vartheta\equiv \tfrac{\pi}{2}$. Thus, for some $m>0$, we consider the set $\mathbb{R}^2-B(2m)$ with optical metric
\begin{equation}
\gb=\frac{1}{(1-\frac{2m}{r})^2}\rmd r \otimes \rmd r+\frac{r^2}{1-\frac{2m}{r}}\rmd \varphi \otimes \rmd \varphi\,.
\label{optmetric-schw}
\end{equation}
Note also that the Schwarzschild solution has a photon sphere at $r=r_{\rm ph}=3m$, corresponding to closed circular geodesics in the optical geometry.
\begin{theorem}
In the equatorial Schwarzschild optical geometry, the curve $\{r=c\}$ minimizes length within the homology class of piecewise smooth curves bounding the area $|\{r_{\rm ph} \leq r \leq c\}|$ with $\{r=r_{\rm ph}\}$.
\label{theorem1}
\end{theorem}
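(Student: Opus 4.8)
The plan is to reformulate the optical metric (\ref{optmetric-schw}) as a surface of revolution with a neck at the photon sphere, and then to certify minimality of $\{r=c\}$ by a calibration (divergence-theorem) argument along the circle foliation.

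I would first pass to the radial arclength coordinate $\rho$ with $\rmd\rho = (1-\tfrac{2m}{r})^{-1}\rmd r$, which casts (\ref{optmetric-schw}) in the form $\gb = \rmd\rho\otimes\rmd\rho + R(\rho)^{2}\,\rmd\varphi\otimes\rmd\varphi$ with circumferential radius $R = r(1-\tfrac{2m}{r})^{-1/2}$. A short computation gives $R' = (1-\tfrac{3m}{r})(1-\tfrac{2m}{r})^{-1/2}$, so $R$ attains its unique minimum precisely at $r=r_{\rm ph}=3m$ — the photon sphere is the neck, confirming again that $\{r=r_{\rm ph}\}$ is a closed geodesic — and differentiating once more, $K = -R''/R = -m(2r-3m)r^{-4} < 0$ throughout $\mathbb{R}^{2}-B(2m)$, i.e. $R$ is strictly convex in $\rho$. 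Writing $\rho_c$ for the $\rho$-value of $\{r=c\}$, the geodesic curvature of this circle with respect to the unit normal $\p_\rho$ is $\kappa_g(c) = R'(\rho_c)/R(\rho_c) = (c-3m)c^{-2}$; it is constant, so $\{r=c\}$ is a critical point of length under the area constraint, and — crucially for what follows — it is strictly increasing in $c$ on $(r_{\rm ph},6m)$.

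The calibration is constructed by seeking a rotationally symmetric field $X = f(\rho)\,\p_\rho$ on $\{r_{\rm ph}\le r\le c\}$ with $\operatorname{div}X\equiv\kappa_g(c)$ and $f(\rho_c)=1$. Since $\operatorname{div}(f\p_\rho)=(fR)'/R$, integrating once expresses $f$ explicitly through the enclosed-area function $V(\rho)=\tfrac{1}{2\pi}\,|\{r_{\rm ph}\le\,\cdot\,\le\rho\}|$, and a direct estimate — cleanest when $c\le6m$ — resting on the monotonicity of $\kappa_g$ and on $K<0$ gives the pointwise bound $|X|=|f|\le1$. Granting this, the divergence theorem finishes the argument: for any admissible piecewise smooth $\gamma$ — homologous to $\{r=c\}$ and bounding, together with $\{r=r_{\rm ph}\}$, a region $\Omega$ with $|\Omega|=|\Omega_0|$, where $\Omega_0:=\{r_{\rm ph}\le r\le c\}$ — applying it on $\Omega$ and on $\Omega_0$ yields $\kappa_g(c)\,|\Omega| = \int_\gamma X\cdot\nu\,\rmd\ell - \Phi$ and $\kappa_g(c)\,|\Omega_0| = |\{r=c\}| - \Phi$, with $\Phi$ the common flux of $X$ through the fixed geodesic $\{r=r_{\rm ph}\}$; subtracting and using $X\cdot\nu\le|X|\le1$ gives $|\gamma|\ge|\{r=c\}|$, with equality forcing $X\cdot\nu\equiv1$, hence $\gamma=\{r=c\}$.

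The main obstacle is the estimate $|f|\le1$, together with the attendant reduction that an admissible $\gamma$ may be taken to lie in $\{r_{\rm ph}\le r\le c\}$ (a $\gamma$ with excursions to larger radius being manifestly too long to compete). Reducing $|f|\le1$ to an explicit one-variable inequality, one finds it is governed near $\{r=c\}$ by the sign of $\kappa_g(c)^{2}+K(c) = (c-2m)(c-6m)c^{-4}$ and away from it by the monotonicity of $\kappa_g$ along the foliation, so the argument goes through cleanly for $c\le6m$ and becomes delicate beyond, where the naive radial field no longer calibrates. For that regime, and for the Reissner--Nordstr\"{o}m and Tolman--Oppenheimer--Volkoff models taken up next, I would instead appeal to the rotationally symmetric isoperimetric comparison of Bray and Morgan \cite{BM}, which applies once the relevant comparison properties of the circle foliation — essentially the monotonicity recorded above — have been verified.
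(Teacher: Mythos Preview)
Your calibration idea is natural, but the proposal has a genuine gap that keeps it from proving the theorem as stated.

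The critical difficulty is the ``attendant reduction'' you invoke: that competitors may be taken to lie in $\{r_{\rm ph}\le r\le c\}$. This is not a technicality. Any \emph{simple} curve $\gamma$ homologous to $\{r=c\}$ and contained in $\{r_{\rm ph}\le r\le c\}$ bounds with $\{r=r_{\rm ph}\}$ a region $\Omega\subset\Omega_0$, so $|\Omega|\le|\Omega_0|$ with equality only if $\gamma=\{r=c\}$; thus every nontrivial simple competitor \emph{must} make an excursion into $\{r>c\}$, and the claim that such excursions are ``manifestly too long to compete'' is exactly the content of the theorem. Nor can your field $X=f\,\partial_\rho$ be extended into $\{r>c\}$ by the same ODE while keeping $|f|\le1$: since $(fR)'=\kappa_g(c)\,R$ and $R(\rho)\sim\rho$ at infinity, one gets $f(\rho)\sim\tfrac12\kappa_g(c)\,\rho\to\infty$. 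Whatever extension of $X$ you choose beyond $r=c$, controlling it there is the heart of the problem, not a side issue.

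Even granting the reduction, you concede that $|f|\le1$ on $\{r_{\rm ph}\le r\le c\}$ is only established for $c\le 6m$, and for $c>6m$ you appeal to Bray--Morgan. That is not a proof of Theorem~\ref{theorem1}; it is an invocation of Proposition~\ref{prop-bm}, which the paper presents separately as an alternative route.

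The paper's own argument is quite different and works uniformly for all $c>r_{\rm ph}$. It introduces a comparison metric $\gb_c$ equal to $\gb$ on $\{r\ge c\}$ and conformal, $\gb=\omega^2\gb_c$, on $\{r\le c\}$, chosen so that $\gb_c$ is flat (conical) inside and has the Euclidean area element throughout. Lemma~2 then reduces the $\gb_c$-isoperimetric statement to the planar one. The transfer back to $\gb$ rests on showing $\omega\ge1$ on $\{r\le c\}$, which follows from $K<0$ via the Hopf maximum principle. Because $\gb=\gb_c$ on $\{r\ge c\}$, competitors with excursions beyond $r=c$ are handled automatically, with no restriction on $c$.
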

\begin{corollary} Dido's theorem for Schwarzschild\\
In Schwarzschild optical geometry, light rays bounding solutions of the isoperimetric problem must lie on the photon sphere.
\label{coro-schw}
\end{corollary}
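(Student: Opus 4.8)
The plan is to read off the corollary from Theorem~\ref{theorem1} by confronting two facts: a light ray is, by Fermat's Principle, a geodesic of the optical metric $\gb$ and therefore has vanishing geodesic curvature; while the coordinate circles $\{r=c\}$ that solve the isoperimetric problem of Theorem~\ref{theorem1} are geodesics of (\ref{optmetric-schw}) for only one value of $c$, namely $c=r_{\rm ph}=3m$.

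Concretely, I would proceed in three steps. First, recall the first-variation characterisation of isoperimetric regions: a curve minimising length subject to a fixed enclosed area is critical for length among area-preserving variations, hence has constant geodesic curvature $k_g$ (the constant being the Lagrange multiplier associated with the area constraint), and by Theorem~\ref{theorem1} such a minimiser in the relevant homology class is realised by a circle $\{r=c\}$. Second, compute $k_g$ for $\{r=c\}$ with respect to (\ref{optmetric-schw}); the standard computation for Schwarzschild optical geometry yields a quantity proportional to $1-\tfrac{3m}{r}$ evaluated at $r=c$, so $k_g=0$ if and only if $c=3m$ --- which is exactly the statement, already noted above, that the photon sphere carries the closed circular geodesics. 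Third, observe that a light ray that bounds a solution of the isoperimetric problem is on the one hand (being a $\gb$-geodesic) a curve with $k_g\equiv 0$, and on the other hand (by the first step) a circle $\{r=c\}$; combining these forces $c=3m$, i.e. the light ray lies on the photon sphere.

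The step I expect to require the most care is not a calculation but a matching of hypotheses: one must verify that the isoperimetric problem invoked in the corollary is precisely the area-constrained, fixed-boundary problem of Theorem~\ref{theorem1} (with $\{r=r_{\rm ph}\}$ as the prescribed part of the boundary and $|\{r_{\rm ph}\le r\le c\}|$ as the enclosed area), and one should dispatch the a priori possibility of minimisers distinct from $\{r=c\}$. The latter is in fact harmless here: any such minimiser still has constant $k_g$, and for a light ray this constant is $0$, so the sign argument via $1-\tfrac{3m}{r}$ again isolates $r=3m$. Minor points --- that a light ray can bound a region only if it is closed, and the choice of orientation for the unit normal defining $k_g$ --- do not affect the conclusion, since closedness is automatic once the curve is recognised as a circle $\{r=c\}$ and a change of normal alters only the sign of $k_g$, not its zero set.
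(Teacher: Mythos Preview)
Your proposal is correct and follows essentially the same route as the paper's proof: the paper observes that $\nabla_\varphi\partial_\varphi=(3m-r)\partial_r$ (your step two, the geodesic-curvature computation yielding the factor $1-\tfrac{3m}{r}$) and combines this with Theorem~\ref{theorem1} (your step one) to conclude that the only circle $\{r=c\}$ which is also a geodesic is $c=3m$. Your detour through the first-variation characterisation of isoperimetric minimisers as curves of constant $k_g$ is not needed once Theorem~\ref{theorem1} is in hand, but it is harmless and the overall logic is the same.
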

\begin{proof}
Since $\nabla_\varphi\p_\varphi = (3m-r)\p_r$, and only the sets $\{r=c\}$ bound solutions of the isoperimetric problem by Theorem \ref{theorem1}, the result follows for geodesics (light rays) on the photon sphere $\{r=r_{\rm ph}\}$. 
\end{proof}

We now provide a direct constructive proof of Theorem \ref{theorem1}, before exploring the implications of a more general isoperimetric theorem for our optical geometry problem in the next subsection.

\begin{proof}
For the case $c=r_{\rm ph}$, any curve $C$ parametrized by $\lambda$ within the homology class bounding a trivial area with $\{r=r_{\rm ph}\}$ must have $C\supset \{r=r_{\rm ph}\}$, and therefore $\tfrac{r}{\sqrt{1-\frac{2m}{r}}}|_C \geq 3\sqrt{3}m$. So its length satisfies
\begin{align*}
|C|=&\int_C \sqrt{\frac{\dot{r}^2}{(1-\frac{2m}{r})^2}+\frac{r^2\dot{\varphi}^2}{1-\frac{2m}{r}}}\ \rmd \lambda\\ 
\mbox{}\geq &\int_C\frac{r}{\sqrt{1-\frac{2m}{r}}}\dot{\varphi}\ \rmd \lambda \geq 2\pi(3\sqrt{3}m) = |\{r=r_{\rm ph}\}|\,,
\end{align*}
which is the desired result.
\\
\indent
For the case $c>r_{\rm ph}$, we adapt a theorem by Bray \cite{B} and express the optical metric (\ref{optmetric-schw}) in two adjacent domains as follows,
\begin{equation}
\gb=\begin{cases} 
      \omega^2(r)\Big(a^{-2}\nu'_-(r)^2\rmd r\otimes \rmd r+a^2\nu_-(r)^2\rmd \varphi \otimes \rmd \varphi \Big) & (r\leq c)\,, \\
      u^{-2}(r)\nu'_+(r)^2\rmd r \otimes \rmd r+u^2(r)\nu_+(r)^2\rmd \varphi \otimes \rmd \varphi & (r\geq c)\,,
   \end{cases}
\label{metric}
\end{equation}
and as for the matching condition at $r=c$, we require that
\begin{equation}
\omega(c)=1\,, \quad \quad u(c)=a \quad \mbox{such that} \quad \omega'(c)=0\,,
\label{matching}
\end{equation}
so one may regard $u(r)=a=\mathrm{const.}$ for $r\leq c$. We also define a metric $\gb_c$ by setting $\omega\equiv 1$ in (\ref{metric}), such that $\gb=\gb_c$ for $r\geq c$ independent of $\omega$.
\\ 
\indent This form of the metric is motivated by the desire to obtain polar coordinate charts $(R,\varphi)$ with $R(r) = \nu_{\pm}(r)$ in the respective domains. For $r\geq c$, this allows us to recognize the area of any set as precisely the Euclidean area on $\mathbb{R}^2$. For $r\leq c$, $\gb$ is conformally related to a conical $\gb_c$. 
\\
\indent To see this, note first of all that by comparing the metric components of (\ref{optmetric-schw}) and (\ref{metric}) for $r\leq c$,
\begin{align}
a^2\omega^2(r)\nu_-(r)^2 &= \frac{r^2}{1-\frac{2m}{r}}\,, \label{nu-1} \\
a^{-2}\omega^2(r){\nu'_-}^2&=\frac{1}{(1-\frac{2m}{r})^2}\,. \label{nu-2}
\end{align}
Now differentiating (\ref{nu-1}) and evaluating it at $r=c$ gives
\[
a^2\omega(c)\omega'(c)\nu_-(c)^2+a^2\omega^2(c)\nu_-(c)\nu'_-(c)=\frac{c}{1-\frac{2m}{c}}-\frac{m}{(1-\frac{2m}{c})^2}\,,
\]
whence, using the matching conditions (\ref{matching}),
\begin{equation}
a^2(\nu_-\nu'_-)(c)=\frac{c-3m}{(1-\frac{2m}{c})^2}\,.
\label{numinus}
\end{equation}
Also, applying (\ref{matching}) to the positive root of the product of (\ref{nu-1}) and (\ref{nu-2}),
\[
(\nu_-\nu'_-)(c) = \frac{c}{(1-\frac{2m}{c})^{\frac32}}\,,
\]
which together with (\ref{numinus}) implies that
\begin{equation}
a^2=\frac{1-\frac{3m}{c}}{\sqrt{1-\frac{2m}{c}}}<1\,,
\label{a}
\end{equation}
since $1-\frac{3m}{c}<1-\frac{2m}{c}<\sqrt{1-\frac{2m}{c}}$. Now recasting (\ref{metric}) for $r\leq c$ in a polar coordinate chart $(\tilde{R},\tilde{\varphi})$ with $\tilde{R}=\tfrac{\nu_-}{a}$, $\tilde{\varphi}=a^2\varphi$,
\begin{equation}
\gb_c = \rmd\tilde{R}\otimes\rmd\tilde{R}+\tilde{R}^2\rmd\tilde{\varphi}\otimes\rmd\tilde{\varphi}\,, 
\label{cone}
\end{equation}
which, clearly, is locally Euclidean and conical as promised, for (\ref{a}) restricts the range of $\tilde{\varphi}$. This is illustrated in Fig. \ref{fig1}.
\\
\indent Finally, for later reference, we note in passing that comparison of (\ref{optmetric-schw}) and (\ref{metric}) for $r\geq c$ implies
\begin{eqnarray}
\nu_+^2 u^2&=\frac{r^2}{1-\frac{2m}{r}}\,, \label{nu1}\\ 
\nu_+ \nu'_+&=\frac{r}{\left(1-\frac{2m}{r}\right)^{\frac32}}\,. \label{nu2}
\end{eqnarray}
\indent Now given these definitions, we need the following two lemmata to prove the theorem. The first is a technical property of the optical metric expressed as (\ref{metric}); the second is a proof of our statement for the metric $\gb_c$. Then, we shall proceed with the proof of the theorem in general.
\begin{figure}[h]
\centering
\def\svgwidth{0.95\textwidth} 
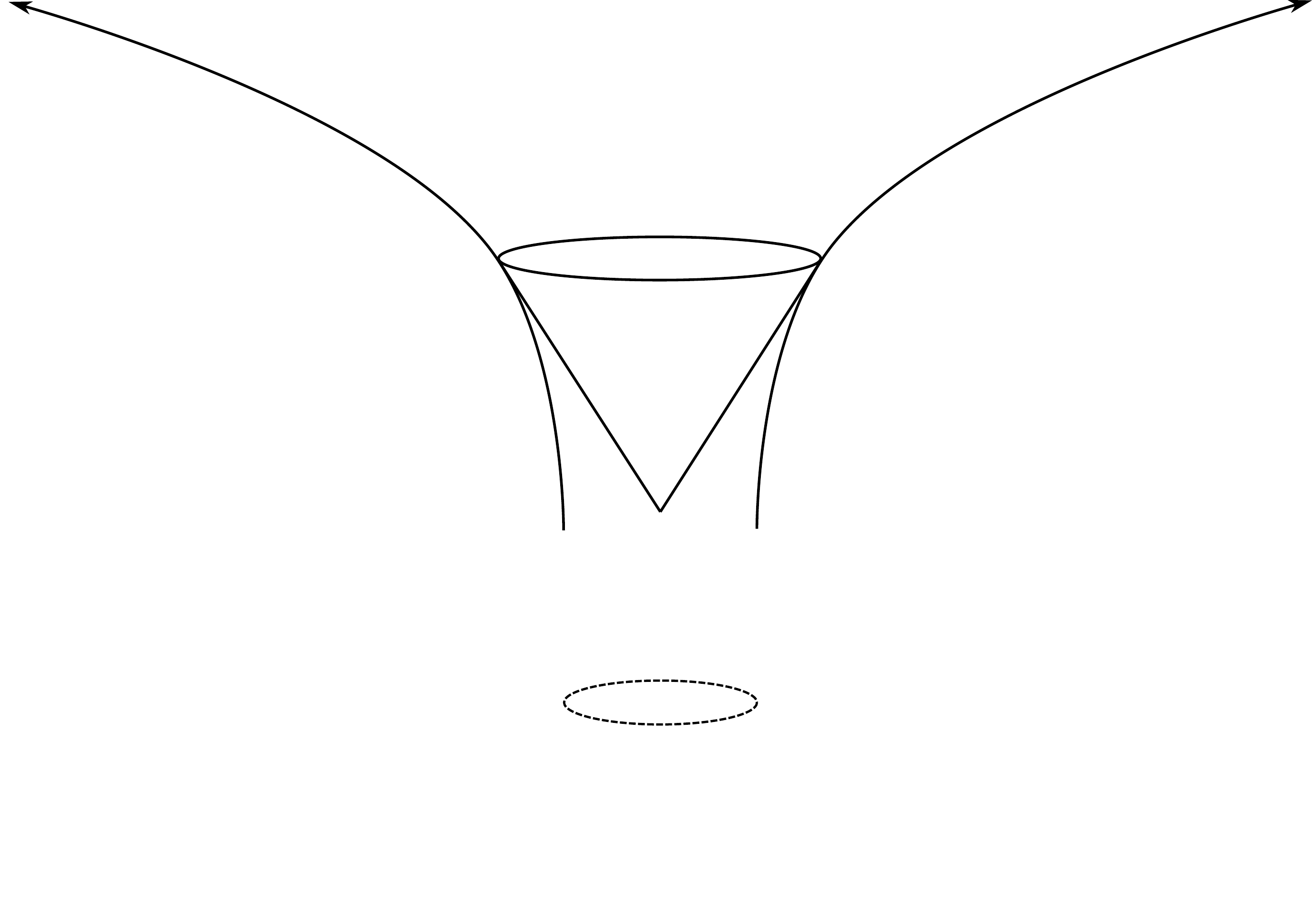
\caption{Schwarzschild optical geometry outside $r_{\rm ph}$. Isometric embedding in Euclidean space, for metrics $\gb$ and $\gb_c$. Note the conical surface for $r\leq c$.}
\label{fig1}
\end{figure}

\begin{lemma} $a^2\leq u^2(r)\leq 1$ for all $r$.
\end{lemma}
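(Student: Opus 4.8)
\noindent\emph{Proof strategy.} The inequality is immediate for $r\le c$: there $u(r)\equiv a$, and $a^2<1$ by (\ref{a}), so $a^2\le u^2(r)\le 1$ holds trivially. The content of the lemma therefore lies in the range $r\ge c$, where the plan is to show that $u^2$ is monotonically increasing.

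To that end I would first eliminate $\nu_+$ in favour of a single first-order relation for $u^2$. Writing $F:=\nu_+^2$, equation (\ref{nu2}) becomes $F'(r)=2r\,(1-\tfrac{2m}{r})^{-3/2}$, while (\ref{nu1}) reads $u^2=\frac{r^2}{(1-\tfrac{2m}{r})F}$, so on $[c,\infty)$ the function $u^2$ is pinned down by these together with the single datum $u^2(c)=a^2$. Differentiating the expression for $u^2$ and then substituting $F=\frac{r^2}{(1-\tfrac{2m}{r})u^2}$ back in, I expect the Schwarzschild factors to collapse and leave the sign identity: $(u^2)'(r)>0$ exactly when $u^2(r)<h(r)$, with equality in place of both strict inequalities, where
\[
h(r):=\frac{1-\tfrac{3m}{r}}{\sqrt{\,1-\tfrac{2m}{r}\,}}\,.
\]
By (\ref{a}) one has $h(c)=a^2$; moreover a short computation yields $\frac{\rmd}{\rmd r}h^2=\frac{2m\,(r-3m)(2r-3m)}{(r^2-2mr)^2}$, which is positive for $r>3m$, so $h$ is strictly increasing there, while the same chain of inequalities used in (\ref{a}) shows $h(r)<1$ on $(3m,\infty)$.

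The estimate then follows from a continuity argument for $g:=h-u^2$ on $[c,\infty)$. Since $u^2(c)=h(c)$ we have $g(c)=0$ and $(u^2)'(c)=0$, hence $g'(c)=h'(c)>0$ and so $g>0$ just to the right of $c$. If $g$ vanished again, let $r_0$ be the first such point; then $g>0$ on $(c,r_0)$ forces $(u^2)'>0$ there, whereas $u^2(r_0)=h(r_0)$ forces $(u^2)'(r_0)=0$, so that $g'(r_0)=h'(r_0)>0$, contradicting that $g$ decreases to $0$ at $r_0$. Hence $g>0$, i.e.\ $u^2<h<1$, on all of $(c,\infty)$; in particular $(u^2)'>0$ throughout, so $u^2$ is strictly increasing and $u^2(r)\ge u^2(c)=a^2$ there. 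Combined with the trivial range $r\le c$, this yields $a^2\le u^2(r)\le 1$ for all $r$.

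The step I expect to cause the most trouble is verifying the sign identity for $(u^2)'$: it is in principle a routine computation, but one must feed in both (\ref{nu1}) and (\ref{nu2}) and carry out the substitution above before the radial factors assemble into $h$; without it the ODE governing $u^2$ looks intractable. The comparison argument afterwards is standard, the only subtlety being the degenerate initial slope $(u^2)'(c)=0$, which is precisely why one works with $g$ and uses $h'(c)>0$.
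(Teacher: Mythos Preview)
Your argument is correct and runs parallel to the paper's: both compute $(u^2)'$ and reduce everything to the auxiliary inequality $u^2(r)\le h(r)$, which in the paper's variables reads $\nu_+^2\ge r^2\big/[(1-\tfrac{3m}{r})\sqrt{1-\tfrac{2m}{r}}]$. The only substantive difference is how this inequality is established. The paper treats both sides as explicit functions of $r$ and shows, by a direct (and somewhat lengthy) calculation, that the derivative of the left-hand side dominates that of the right for all $r\ge c$, so the inequality propagates from its equality case at $r=c$. You instead exploit the sign identity $(u^2)'>0\iff u^2<h$ together with $h'>0$ in a barrier/ODE comparison for $g=h-u^2$. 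Your route avoids computing the derivative of $r^2\big/[(1-\tfrac{3m}{r})\sqrt{1-\tfrac{2m}{r}}]$, replacing it with the simpler verification of $(h^2)'>0$; the price is the continuity argument needed to get past the degenerate starting slope $(u^2)'(c)=0$, which you handle correctly.
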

\begin{proof}
We begin by recalling that, in the domain $r\leq c$, $u(r)=a <1$ by (\ref{a}). Next, we consider the domain $r\geq c$ and first show that
\begin{equation}
\nu_+^2\geq \frac{r^2}{(1-\frac{3m}{r})\sqrt{1-\frac{2m}{r}}}\,.
\label{inequality}
\end{equation}
To this end, notice that equality obtains at $r=c$ since, from (\ref{nu1}) together with (\ref{matching}) and (\ref{a}),
\[
\nu_+(c)^2=\frac{1}{a^2}\frac{c^2}{1-\frac{2m}{c}}= \frac{c^2}{(1-\frac{3m}{c}) \sqrt{1-\frac{2m}{c}}}\,.
\] 
Now the inequality follows from the fact that the derivative of the left-hand side of (\ref{inequality}) exceeds the derivative of the right-hand side: on the one hand, (\ref{nu2}) implies that
\begin{equation}
\frac{\rmd \nu_+^2}{\rmd r}=2\nu_+{\nu'}_+ = \frac{2r}{(1-\frac{2m}{r})^{\frac32}}\,.
\label{derivative}
\end{equation}
On the other hand, the derivative of the right-hand side is
\begin{align*}
&\frac{\rmd}{\rmd r}\frac{r^2}{(1-\frac{3m}{r})\sqrt{1-\frac{2m}{r}}} = \\
&=\frac{2r}{(1-\frac{3m}{r})\sqrt{1-\frac{2m}{r}}}-\frac{3m}{(1-\frac{3m}{r})^2\sqrt{1-\frac{2m}{r}}}-\frac{m}{(1-\frac{3m}{r})(1-\frac{2m}{r})^{\frac32}} \\
&=\frac{2r}{(1-\frac{2m}{r})^{\frac32}}+\frac{m}{(1-\frac{3m}{r})(1-\frac{2m}{r})^{\frac32}}-\frac{3m}{(1-\frac{3m}{r})^2\sqrt{1-\frac{2m}{r}}} \\
&=\frac{2r}{(1-\frac{2m}{r})^{\frac32}}-2m\frac{1-\frac{3m}{2r}}{(1-\frac{3m}{r})^2(1-\frac{2m}{r})^{\frac32}}\leq \frac{2r}{(1-\frac{2m}{r})^{\frac32}}=\frac{\rmd \nu_+^2}{\rmd r}\,,
\end{align*}
with (\ref{derivative}), yielding the result. We can now prove the inequality of the lemma using an analogous argument: first, recall again from (\ref{matching}) that $u(c)=a$. Then using (\ref{nu1}) and (\ref{inequality}),
\begin{align*}
\frac{\rmd u^2}{\rmd r} &= \frac{\rmd}{\rmd r}\frac{r^2}{\nu^2_+(1-\frac{2m}{r})} = \frac{2r}{\nu_+^2(1-\frac{2m}{r})}-\frac{2m}{\nu_+^2(1-\frac{2m}{r})^2}-\frac{r^2}{\nu^4_+(1-\frac{2m}{r})}\frac{\rmd \nu^2_+}{\rmd r}\\
&=\frac{2r}{\nu_+^4(1-\frac{2m}{r})^2}\left(\left(1-\frac{3m}{r}\right)\nu_+^2-\frac{r^2}{\sqrt{1-\frac{2m}{r}}}\right)\geq 0\,,
\end{align*}
with equality at $r=c$, and therefore $u(r)$ will increase from $a$ for $r>c$. But using (\ref{inequality}), we also see that
\[
\left(1-\frac{2m}{r}\right)\nu_+^2\geq \left(1-\frac{3m}{r}\right)\nu_+^2\geq \frac{r^2}{\sqrt{1-\frac{2m}{r}}}\geq r^2\,,
\]
and hence, again from (\ref{nu1}),
$$u^2(r) = \frac{r^2}{(1-\frac{2m}{r})\nu_+^2}\leq 1\,,$$
which completes the proof of the lemma.
\end{proof}
\begin{lemma}
For $(\mathbb{R}^2-\{0\},\gb_c)$, assume the sets $S'$ and $\Sigma':=\{r\leq c\}$ satisfy $|S'|_c\geq |\Sigma'|_c$. Then, $|\p S'|_c\geq |\p\Sigma'|_c$.
\end{lemma}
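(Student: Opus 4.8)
The plan is to reduce the claim to a statement about geodesic discs centred at the cone point of $\gb_c$, and then to dispatch that case using Lemma~1.

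\emph{Step 1 (normal form).} Introduce as new radial coordinate the $\gb_c$-geodesic distance $t$ from the cone point $r=0$: on the conical region $r\leq c$ one has $t=\nu_-(r)/a$, while on $r\geq c$ one has $\rmd t=\bigl(\nu'_+(r)/u(r)\bigr)\,\rmd r$. In coordinates $(t,\varphi)$, $\varphi\in[0,2\pi)$, the metric becomes $\gb_c=\rmd t\otimes\rmd t+\phi(t)^2\,\rmd\varphi\otimes\rmd\varphi$, where $\phi(t)=a^2 t$ for $t\leq t_c:=\nu_+(c)/a$ --- so the pole is a cone point of total angle $2\pi a^2<2\pi$ by (\ref{a}) --- and $\phi(t)=u(t)\nu_+(t)$ for $t\geq t_c$. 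The matching conditions (\ref{matching}), together with $u'(c)=0$ (which follows from $\frac{\rmd u^2}{\rmd r}\big|_{r=c}=0$ in the proof of Lemma~1), show that $\phi$ is $C^1$ across $t_c$; and since $\nu'_+>0$ by (\ref{nu2}) while $u'\geq 0$ on $r\geq c$ (also shown in the proof of Lemma~1), we get $\phi'\geq 0$ throughout $(0,\infty)$. Finally, $\Sigma'$ is precisely the geodesic disc $B:=\{t\leq t_c\}$, with $|\p B|_c=2\pi\phi(t_c)=2\pi a\,\nu_+(c)$ and $|B|_c=2\pi\int_0^{t_c}\phi=\pi\,\nu_+(c)^2$.

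\emph{Step 2 (discs are isoperimetric).} A surface of revolution $\rmd t^2+\phi(t)^2\,\rmd\varphi^2$ with $\phi(0)=0$, $\phi>0$ on $(0,\infty)$ and $\phi$ non-decreasing is exactly the situation in which geodesic discs centred at the pole minimise perimeter among sets of equal area; this is the theorem of Bray \cite{B} that we adapt, whose hypotheses Step~1 has verified for $\gb_c$. In outline: after a routine approximation reducing to $S'$ with piecewise-smooth boundary, one replaces $S'$ by the radially symmetric competitor meeting each circle $\{t=\mathrm{const}\}$ in a centred sub-arc (or the whole circle) of the same length --- this preserves $|S'|_c$ and, by the co-area formula for $t$ restricted to $\p S'$ together with $|\nabla t|_c=1$, does not increase $|\p S'|_c$ --- and then uses $\phi'\geq 0$ to rearrange the slices radially into a geodesic disc $B_{t^*}$ with $|B_{t^*}|_c=|S'|_c$ and $|\p B_{t^*}|_c\leq|\p S'|_c$.

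\emph{Step 3 (conclusion).} Since $t\mapsto|B_t|_c=2\pi\int_0^t\phi$ is strictly increasing and $|S'|_c\geq|\Sigma'|_c=|B|_c$, we have $t^*\geq t_c$; and since $t\mapsto|\p B_t|_c=2\pi\phi(t)$ is non-decreasing by Step~1, it follows that $|\p S'|_c\geq|\p B_{t^*}|_c=2\pi\phi(t^*)\geq 2\pi\phi(t_c)=|\p\Sigma'|_c$, as claimed.

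The main obstacle is Step~2: showing that the radial symmetrisation and the subsequent monotone rearrangement of slices do not increase perimeter. The only structural input this needs is the monotonicity $\phi'\geq 0$, which is precisely what Lemma~1 buys us (via $u\in[a,1]$ and $u$ increasing outside $r=c$), but the argument must be run with care at radii where $S'$ already contains an entire circle and near the cone point --- this is the content we borrow from \cite{B}. (One also tacitly restricts to bounded $S'$, the only case relevant to Theorem~\ref{theorem1}.)
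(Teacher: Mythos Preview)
Your argument is correct, but it takes a genuinely different route from the paper. The paper exploits a feature of $\gb_c$ that you overlook: in the polar chart $(R,\varphi)$ the metric has the form $u^{-2}\rmd R^2+u^2R^2\rmd\varphi^2$, so $\sqrt{\det\gb_c}=R$ and the $\gb_c$-area element is \emph{exactly} the Euclidean one. Hence $|S'|_c=|S'|_{\mathbb{R}^2}$ and $|\Sigma'|_c=|\Sigma'|_{\mathbb{R}^2}$, and the classical planar isoperimetric inequality gives $|\p S'|_{\mathbb{R}^2}\geq|\p\Sigma'|_{\mathbb{R}^2}$. Lemma~1 then enters only for the length comparison: $a^2\leq u^2\leq 1$ yields $\rmd t_c\geq a\,\rmd t_{\mathbb{R}^2}$ pointwise, with equality on $\p\Sigma'=\{r=c\}$, whence $|\p S'|_c\geq a|\p S'|_{\mathbb{R}^2}\geq a|\p\Sigma'|_{\mathbb{R}^2}=|\p\Sigma'|_c$.

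By contrast, you pass to geodesic polar coordinates, verify that the profile $\phi$ is non-decreasing (using $u'\geq 0$ from the proof of Lemma~1), and then invoke Bray's symmetrization theorem for surfaces of revolution. This is heavier machinery---you import a full symmetrization-and-rearrangement argument where the paper gets away with the classical isoperimetric inequality---but it has the minor advantage that you only need the monotonicity of $u$, not the upper bound $u\leq 1$. The paper's proof is shorter and more self-contained; yours would generalise more readily to metrics where the area element is not Euclidean.
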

\begin{proof}
Recall that the metric $\gb_c$ may be expressed in polar coordinate charts $(R,\phi)\in \mathbb{R}^2-\{0\}$ as, 
\[
\gb_c=\begin{cases} 
      \frac{1}{u(c)^2}\rmd R\otimes \rmd R+u(c)^2R^2\rmd \varphi \otimes \rmd \varphi  & (r\leq c)\,, \\
      \frac{1}{u(R)^2}\rmd R\otimes \rmd R+u(R)^2R^2\rmd \varphi \otimes \rmd \varphi & (r\geq c)\,.
   \end{cases}
\]
Thus, the area element becomes
\[
\rmd A_c=\sqrt{\det \gb_c}\,\rmd R\rmd \varphi=R\rmd R\rmd \varphi=\rmd A_{\mathbb{R}^2}\,,
\]
that is, the standard Euclidean area element, whence
\[
|S'|_c=\int_{S'}\rmd A_c=\int_{S'}\rmd A_{\mathbb{R}^2}=|S'|_{\mathbb{R}^2}\,, \quad |\Sigma'|_c=\int_{\Sigma'}\rmd A_c=\int_{\Sigma'}\rmd A_{\mathbb{R}^2}=|\Sigma'|_{\mathbb{R}^2}\,,
\]
and therefore, by assumption,
\[
|S'|_{\mathbb{R}^2}=|S'|_c\geq |\Sigma'|_c = |\Sigma'|_{\mathbb{R}^2}\,.
\]
Hence, from the Euclidean version of the isoperimetric problem, we conclude that
\begin{equation}
|\p S'|_{\mathbb{R}^2}\geq |\p \Sigma'|_{\mathbb{R}^2}\,. 
\label{isoperi}
\end{equation}
Moreover, the line element of $\gb_c$ satisfies
\begin{equation}
\rmd t_c^2=\frac{1}{u^2}\rmd R^2+u^2 R^2\rmd\varphi^2\geq a^2(\rmd R^2+R^2\rmd \varphi^2)=a^2\rmd t_{\mathbb{R}^2}^2
\label{lineelement}
\end{equation}
since, by Lemma 1,
\[
u^2\geq a^2 \ \mbox{and also} \ 1\geq u^2\geq u^4 \geq a^2 u^2 \Rightarrow \frac{1}{u^2} \geq a^2\,.
\]
Notice also that,
\[
|\p\Sigma'|_c=\int_{\p\Sigma'}\rmd t_c=\int_{\{r=c\}}\rmd t_c=\int_{\{R(c)=\mathrm{const.}\}}a \rmd t_{\mathbb{R}^2}=a|\p \Sigma'|_{\mathbb{R}^2}\,.
\]
Thus, using (\ref{isoperi}) and (\ref{lineelement}),
$$|\p S'|_c\geq a|\p S'|_{\mathbb{R}^2}\geq a|\p\Sigma'|_{\mathbb{R}^2} = |\p\Sigma'|_c\,,$$
as required.
\end{proof}
Now having established the isoperimetric problem for $\gb_c$ with Lemma 2, we shall see in the following how it can be extended to $\gb$, thus proving the theorem. 
\\
\indent First, consider the Gaussian curvature of the equatorial plane in the optical metric. In general, the Gaussian curvature is defined as 
\begin{align}
K&=\frac{\gb\left((\nabla_\varphi\nabla_r-\nabla_r\nabla_\varphi)\tfrac{\p}{\p r},\frac{\p}{\p \varphi}\right)}{\det \gb}=\frac{R_{r\varphi r\varphi}}{\det \gb}\label{gauss1} \\
&=\frac{1}{\sqrt{\det \gb}}\left(\frac{\p}{\p \varphi}\left(\frac{\sqrt{\det \gb}}{\gb_{rr}}\Gamma^\varphi{}_{rr}\right)-\frac{\p}{\p r}\left(\frac{\sqrt{\det \gb}}{\gb_{rr}}\Gamma^\varphi_{r\varphi}\right)\right)\nonumber \\
&=-\frac{2m}{r^3}\left(1-\frac{3m}{2r}\right)\,, \label{gauss2}
\end{align}
which is negative outside the event horizon, $r>2m$. Moreover, recall that in the domain $r\leq c$, the metric $\gb$ is conformally related to $\gb_c$ according to $\gb=\omega^2 \gb_c$. Applying this relation to (\ref{gauss1}), one finds that $K$ is related to the Gaussian curvatures $K_c$ with respect to $\gb_c$ by
\begin{equation}
K=\frac{1}{\omega^2}(K_c-\Delta \ln \omega)\,,
\label{gauss3}
\end{equation}
where $\Delta$ is the usual Laplace-Beltrami operator with respect to $\gb$,
\[
\Delta=\frac{1}{\sqrt{\det \gb}}\frac{\p}{\p x^i}\left(\sqrt{\det \gb} \gb^{ij}\frac{\p}{\p x^j}\right)\,.
\]
Now since $\gb_c$ is conical and locally Euclidean by (\ref{cone}), we have $K_c=0$. But then (\ref{gauss3}) and $K<0$ from (\ref{gauss2}) imply that
\[
\Delta \ln \omega \geq 0 \quad (r\leq c)\,,
\]
and so we know from the Hopf maximum principle that $\ln \omega$ must attain its maximum on the boundary of an annulus $b\leq r\leq c$ for some arbitrary $b$, with a non-zero outward-pointing gradient. Since $(\ln\omega)'(c) = 0$ by the matching condition, this must be at the arbitrary \textit{inner} boundary, $r=b$. Therefore, $\ln \omega$ must increase from its value $\ln\omega(c) = 0$ for $r<c$, and thus we conclude that
\begin{equation}
\omega (r) >1 \quad (r < c)\,.
\label{omega}
\end{equation}
\indent Next, we shall assume that a curve $C$ within the homology class of $\{r=c\}$ bounds, with $\{r=3m\}$, a set $S$ of area $|S|\geq |\Sigma|$ where $\Sigma:=\{3m\leq r\leq c\}$, as illustrated in Fig. \ref{fig1}. Then consider the domain $U = S\cap\Sigma$ and note that, within $\Sigma$ and hence $U$, $r\leq c$ and so $\gb=\omega^2 \gb_c$ and the area element becomes $\rmd A=\sqrt{\det \gb}\,\rmd A_{\mathbb{R}^2}=\omega^2\rmd A_c$. On the other hand, in the domain $S-U$, $\rmd A=\rmd A_c$. Thus,
\[
|S|=\int_S\rmd A=\int_U\omega^2 \rmd A_c+\int_{S-U}\rmd A_c\,,
\]
and so, by assumption,
\[
\int_U\omega^2 \rmd A_c+\int_{S-U}\rmd A_c = |S|\geq |\Sigma| = \int_U\omega^2\rmd A_c+\int_{\Sigma-U}\omega^2\rmd A_c\,,
\]
which together with (\ref{omega}) yields 
\begin{equation}
\int_{S-U}\rmd A_c\geq \int_{\Sigma-U}\omega^2\rmd A_c\geq \int_{\Sigma-U}\rmd A_c\,.
\label{area1}
\end{equation}
By adding the area within $U$ on both sides of (\ref{area1}), we obtain the areas $|S'|_c$ and $|\Sigma'|_c$, respectively, which again by (\ref{area1}) obey
\[
|S'|_c\geq |\Sigma'|_c\,.
\]
We can now apply Lemma 2 to these areas, to conclude that their boundary curves satisfy
\begin{equation}
|\p S'|_c\geq |\p\Sigma'|_c\,.
\label{area2}
\end{equation}
\indent Finally, we turn to the length of the curve $C$. Note that $C$ conists of $\p S-\p U$, its portion outside of $\Sigma$, and $\p U-\p \Sigma$, its portion inside of $\Sigma$. Again, since $\gb=\omega^2 \gb_c$ within $\Sigma$, the line element is $\rmd t=\omega \rmd t_c$. On the other hand, $\rmd t=\rmd t_c$ outside of $\Sigma$. Thus,
\[
|C| =\int_C \rmd t= \int_{\p U-\p\Sigma}\omega \rmd t_c+\int_{\p S-\p U}\rmd t_c\geq \int_C\rmd t_c=|\p S'|_c
\]
using (\ref{omega}). Hence, from (\ref{area2}),
\[
|C| \geq |\p\Sigma'|_c=|\{r=c\}|_c=|\{r=c\}|
\]
since $\gb_c=\gb$ at $r=c$. Overall, therefore, $|S|\geq |\Sigma| \Rightarrow |C|\geq |\{r=c\}|$, completing the proof.
\end{proof}

\subsection{More general case}
Having established Dido's theorem for Schwarzschild optical geometry, we shall now discuss how it may be also be regarded as a consequence of a deeper theorem by Bray and Morgan, which allows a generalization of the result beyond Schwarzschild as well.
\begin{proposition}\cite{BM}(Corollary 2.4)
Given an $n+1$-dimensional hypersurface of revolution with line element
\begin{equation}
\rmd t^2=\rmd r^2+f^2(r)\rmd \Omega^2_n\,,
\label{bm}
\end{equation}
where $\rmd\Omega^2_n$ is the line element of the $n$-dimensional unit sphere, with the following conditions for $r\geq r_0$,
\begin{align}
0\leq &\frac{\rmd f}{\rmd r}<1\,, \label{bm-cond1}\\
&\frac{\rmd f^2}{\rmd r^2}\geq 0\,, \label{bm-cond2}
\end{align}
then every sphere of revolution $S_r$ for $r\geq r_0$ minimizes perimeter uniquely among smooth surfaces enclosing fixed volume with $S_{r_0}$.
\label{prop-bm}
\end{proposition}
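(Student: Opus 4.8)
Proposition~\ref{prop-bm} is the isoperimetric comparison theorem of Bray and Morgan, so strictly one may simply invoke \cite{BM}; but it is worth recording the structure of the argument, and the way our Schwarzschild proof of Theorem~\ref{theorem1} fits into it. Write the hypersurface of revolution as the warped product $M=(I\times S^n,\ \rmd r^2+f(r)^2\rmd\Omega_n^2)$, so that $S_r$ is a round $n$-sphere of radius $f(r)$ with $|S_r|=\omega_n f(r)^n$, $\omega_n:=|S^n|$. Since $|\nabla r|\equiv1$, the coarea formula gives $\mathrm{Vol}(\Omega)=\int_I \mathcal{H}^n_{S_r}(\Omega\cap S_r)\,\rmd r$ for any region $\Omega$ with $S_{r_0}\subset\p\Omega$. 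The plan is: (i) produce an isoperimetric region $\Omega$ enclosing the prescribed volume together with $S_{r_0}$ (existence and interior regularity from geometric measure theory, modulo the non-compactness addressed below); (ii) spherically symmetrise $\Omega$ slice by slice, replacing each $\Omega\cap S_r$ by the geodesic cap in $S_r$ of the same $n$-dimensional measure, all caps centred at a common pole. This preserves the enclosed volume by the coarea identity and does not increase perimeter: the part of $\p\Omega$ transverse to the $S_r$ is controlled by the round-sphere isoperimetric inequality on each slice, the remainder by a Cauchy--Schwarz estimate in $r$. Hence it suffices to treat rotationally symmetric competitors.

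Such an $\Omega^*$ is encoded by a profile $\rho:I\to[0,\pi]$ (the polar radius of the cap at level $r$, with $\rho(r_0)=\pi$), and its volume and lateral perimeter are
\[
\mathrm{Vol}(\Omega^*)=\int f(r)^n\,V_n(\rho(r))\,\rmd r,\qquad
|\Sigma^*|=\int f(r)^{n-1}\,\mathcal{S}_n(\rho(r))\,\sqrt{1+f(r)^2\rho'(r)^2}\;\rmd r,
\]
with $V_n$, $\mathcal{S}_n$ the cap volume and cap-boundary area on the unit $S^n$. One then shows that, under $0\le f'<1$ and $f''\ge0$, the volume-constrained minimiser of $|\Sigma^*|$ is the ``bang--bang'' profile $\rho\equiv\pi$ on $[r_0,c]$ and $\rho\equiv0$ afterwards, i.e. $\Omega^*=\{r_0\le r\le c\}$ and $\Sigma^*=S_c$. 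Heuristically, $f'\ge0$ makes $|S_r|$ nondecreasing, so for fixed enclosed volume it is never advantageous to fill in at large $r$ rather than small $r$; convexity $f''\ge0$, together with $f'<1$ --- which says the metric dominates the flat metric cone transversally, exactly the role played by $\omega\ge1$ in the Schwarzschild argument --- forbids any profitable bulging of the boundary into intermediate latitudes. Concretely I would compute the first variation of $|\Sigma^*|$ at fixed volume, note that a critical profile has constant generalised mean curvature, and use $f''\ge0$ and $f'<1$ to rule out every critical configuration except the vertical cylinder $S_c$; uniqueness then follows by tracking the equality cases here and in the symmetrisation step.

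The points that need real care are: (i) the non-compactness of the end $r\to\infty$ --- one must exclude minimising sequences whose mass escapes to infinity, where again $f'\ge0$ (areas do not shrink) is the saving hypothesis, possibly together with working directly with the isoperimetric profile $I_M$ and a differential-inequality comparison against the model rather than with a single minimiser; and (ii) making perimeter-monotonicity under spherical symmetrisation rigorous in the finite-perimeter category, including the interaction of $\p\Omega$ with the fixed piece $S_{r_0}$ and with possible horizontal portions of $\p\Omega$. Granting the Euclidean and spherical isoperimetric inequalities (as we already did for $\gb_c$ in Lemma~2), what remains is the one-dimensional variational step above, which is exactly where $0\le f'<1$ and $f''\ge0$ are consumed; the full details are in \cite{BM}.
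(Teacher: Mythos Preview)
The paper does not prove Proposition~\ref{prop-bm} at all: it is stated as a citation of Corollary~2.4 in \cite{BM} and immediately applied, with no proof environment following it. So there is nothing in the paper to compare your sketch against; you are supplying strictly more than the authors do.

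As an expository outline of the Bray--Morgan argument your sketch is serviceable but not quite faithful to \cite{BM}. Their proof does not proceed by slice-wise spherical symmetrisation followed by a one-dimensional variational problem for a cap profile $\rho(r)$; rather, it is a direct comparison argument at the level of the isoperimetric profile. One fixes $c$ and builds a comparison metric that is a flat cone inside $r=c$ and agrees with the given warped product outside (this is precisely what the paper does for Schwarzschild via $\gb_c$ in the proof of Theorem~\ref{theorem1}). The hypothesis $f''\ge 0$ ensures the original metric dominates the cone conformally on $\{r\le c\}$ (the $\omega\ge 1$ step), and $0\le f'<1$ gives the cone angle deficit needed so that, after passing to the flat model, the Euclidean isoperimetric inequality finishes the job. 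Your sketch correctly identifies where each hypothesis is spent, but packages the mechanism differently; the symmetrisation/bang--bang route you describe is closer in spirit to Ros-type arguments and would require the careful handling you flag in points (i) and (ii), whereas the actual \cite{BM} proof sidesteps those issues. If you want to keep an outline here, it would be more accurate---and better matched to the paper's own Theorem~\ref{theorem1} proof---to present the cone-comparison version.
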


\subsubsection{Schwarzschild revisited}
The optical metric of the Schwarzschild equatorial plane (\ref{optmetric-schw}) can be recast as a line element in the form of (\ref{bm}),
\[
\rmd t^2=\frac{\rmd r^2}{\left(1-\frac{2m}{r^2}\right)^2}+\frac{r^2 \rmd \varphi^2}{1-\frac{2m}{r}}=\rmd r^{\ast2}+f(r^\ast)\rmd \Omega_1^2\,,
\]
where $r^\ast$ is known in the physical context as the Regge-Wheeler tortoise coordinate. Thus, comparison yields
\[
\frac{\rmd f}{\rmd r^\ast}\left(r(r^\ast)\right)=\frac{1-\frac{3m}{r}}{\sqrt{1-\frac{2m}{r}}}\,,
\]
and it is immediately apparent that condition (\ref{bm-cond1}) of Proposition \ref{prop-bm} is satisfied outside the photon sphere,
\[
0\leq \frac{\rmd f}{\rmd r^\ast}<1\,: \quad r_{\rm ph}\leq r<\infty\,,
\]
and since
\[
\frac{\rmd^2f}{\rmd r^{\ast2}}\left(r(r^\ast)\right)=\frac{2m}{r^2\sqrt{1-\frac{2m}{r}}}\left(1-\frac{3m}{2r}\right)\,,
\]
likewise condition (\ref{bm-cond2}),
\[
\frac{\rmd^2f}{\rmd r^{*2}}\geq 0\,: \quad 2m<r_{\rm ph}\leq r < \infty\,.
\]
Thus, we recover Dido's theorem for Schwarzschild, Corollary \ref{coro-schw}.

\subsubsection{Reissner-Nordstr\"{o}m}
Next, we shall turn to the Reissner-Nordstr\"{o}m solution of the Einstein-Maxwell equations, described by a mass parameter $m$ and a charge parameter $q$. The line element of the equatorial plane in the optical geometry is 
\begin{equation}
\rmd t^2=\frac{\rmd r^2}{\left(1-\frac{2m}{r^2}+\frac{q^2}{r^2}\right)^2}+\frac{r^2 \rmd \varphi^2}{1-\frac{2m}{r}+\frac{q^2}{r^2}}\,.
\label{optmetric-rn}
\end{equation}
As a result of the additional parameter, Reissner-Nordstr\"{o}m admits {\it two} photon spheres, at radii
\begin{equation}
r^{\pm}_{\rm ph}=\frac{3}{2}\left(m\pm\sqrt{m^2-\frac{8}{9}q^2}\right)\,,
\label{rn-phsph}
\end{equation}
provided that $m^2 > \tfrac{8}{9}q^2$. Now comparing (\ref{optmetric-rn}) with (\ref{bm}), we obtain
\begin{equation}
\frac{\rmd f}{\rmd r^\ast}\left(r(r^\ast)\right)=\frac{1-\frac{3m}{r}+\frac{2q^2}{r^2}}{\sqrt{1-\frac{2m}{r}+\frac{q^2}{r^2}}}\,,
\label{rn-f1}
\end{equation}
and conclude that condition (\ref{bm-cond1}) is indeed satisfied outside of the outer photon sphere at $r^+_{\rm ph}$,
\[
0\leq \frac{\rmd f}{\rmd r^\ast}<1\,: \quad r^+_{\rm ph}\leq r<\infty\,,
\]
and this is also the case for the second condition, (\ref{bm-cond2}), albeit less obviously:
\begin{lemma}
\[
\frac{\rmd^2f}{\rmd r^{*2}}\geq 0\,: \quad r^+_{\rm ph}\leq r < \infty\,.
\]
\end{lemma}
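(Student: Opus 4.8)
The plan is to compute $\frac{\rmd^2 f}{\rmd r^{*2}}$ explicitly, reduce the sign question to the positivity of a cubic polynomial in $r$, and settle that by a monotonicity argument on $[r^+_{\rm ph},\infty)$. Write $N(r)=1-\frac{2m}{r}+\frac{q^2}{r^2}$ for the function appearing in the optical metric (\ref{optmetric-rn}), so that comparing (\ref{optmetric-rn}) with (\ref{bm}) gives $\frac{\rmd r}{\rmd r^\ast}=N(r)$, and let $P(r)=1-\frac{3m}{r}+\frac{2q^2}{r^2}$ be the numerator in (\ref{rn-f1}), i.e.\ $\frac{\rmd f}{\rmd r^\ast}=P\,N^{-1/2}$. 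The chain rule then gives
\[
\frac{\rmd^2 f}{\rmd r^{*2}}=N\,\frac{\rmd}{\rmd r}\!\left(\frac{P}{\sqrt N}\right)=\frac{2P'N-PN'}{2\sqrt N}\,,
\]
and since $N>0$ for $r\geq r^+_{\rm ph}$ (here $N$ is increasing, $N'=\tfrac{2}{r^3}(mr-q^2)>0$, starting from $N(r^+_{\rm ph})=\tfrac{r^+_{\rm ph}-m}{2r^+_{\rm ph}}>0$, which one gets by substituting $q^2$ from $P(r^+_{\rm ph})=0$), the sign of $\frac{\rmd^2 f}{\rmd r^{*2}}$ equals that of $2P'N-PN'$. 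Clearing denominators,
\[
2P'N-PN'=\frac{2}{r^5}\,Q(r)\,,\qquad Q(r)=2mr^3-3(m^2+q^2)r^2+6mq^2r-2q^4\,,
\]
so it remains only to show $Q\geq0$ on $[r^+_{\rm ph},\infty)$.

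The key observation is that the derivative of this cubic factors cleanly: $Q'(r)=6(r-m)(mr-q^2)$. Since $r^+_{\rm ph}=\tfrac32\big(m+\sqrt{m^2-\tfrac89 q^2}\big)\geq\tfrac32 m$, while the existence hypothesis $m^2>\tfrac89 q^2$ forces $q^2/m<\tfrac98 m<\tfrac32 m$, both $r-m$ and $mr-q^2$ are strictly positive for $r\geq r^+_{\rm ph}$; hence $Q'>0$ and $Q$ is strictly increasing there, and it suffices to evaluate $Q$ at the single point $r=r^+_{\rm ph}$. Using the defining relation $P(r^+_{\rm ph})=0$, the identity $Q=\tfrac12 r^5(2P'N-PN')$ collapses to $Q(r^+_{\rm ph})=(r^+_{\rm ph})^5\,P'(r^+_{\rm ph})\,N(r^+_{\rm ph})$; with $P'(r^+_{\rm ph})=\tfrac{2r^+_{\rm ph}-3m}{(r^+_{\rm ph})^2}=\tfrac{3}{(r^+_{\rm ph})^2}\sqrt{m^2-\tfrac89 q^2}>0$ and $N(r^+_{\rm ph})>0$ from above, this is positive. (Equivalently, substituting $q^2=\tfrac12 r(3m-r)$ directly into the cubic yields $Q(r^+_{\rm ph})=\tfrac12(r^+_{\rm ph})^2(2r^+_{\rm ph}-3m)(r^+_{\rm ph}-m)>0$.) Together with monotonicity this gives $Q>0$, hence $\frac{\rmd^2 f}{\rmd r^{*2}}>0$, throughout $[r^+_{\rm ph},\infty)$.

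The only laborious step is the algebra passing from $2P'N-PN'$ to the cubic $Q$ — expanding $(3mr-4q^2)(r^2-2mr+q^2)-(r^2-3mr+2q^2)(mr-q^2)$ and collecting powers of $r$ — and the whole argument hinges on the resulting cubic happening to have the conveniently factorable derivative $6(r-m)(mr-q^2)$; once that is noticed the monotonicity, and hence the reduction to a single endpoint, is immediate, so the main obstacle is really just organising the computation so that this factorisation becomes visible. As a consistency check, setting $q=0$ reduces $Q(r)$ to $mr^2(2r-3m)$ and recovers the Schwarzschild statement $\frac{\rmd^2 f}{\rmd r^{*2}}\geq 0$ for $r\geq r_{\rm ph}=3m$ established above.
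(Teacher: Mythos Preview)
Your proof is correct. Both you and the paper reduce the sign of $\frac{\rmd^2 f}{\rmd r^{*2}}$ to the positivity of the same cubic --- your $Q(r)$ is exactly $2m$ times the paper's polynomial $P(r;m,q)=r^3-\tfrac{3}{2m}(m^2+q^2)r^2+3q^2r-\tfrac{q^4}{m}$ --- but you analyse it differently. The paper shifts the variable by $r^+_{\rm ph}$ and invokes Descartes' Rule of Signs, asserting (without displaying the computation) that all coefficients of the shifted cubic are positive; you instead notice that the derivative factors as $Q'(r)=6(r-m)(mr-q^2)$, deduce monotonicity on $[r^+_{\rm ph},\infty)$, and evaluate at the endpoint using $P(r^+_{\rm ph})=0$. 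Your route is slightly more elementary and fully explicit, and it also makes transparent why equality is attained precisely in the degenerate case $m^2=\tfrac{8}{9}q^2$ (where $2r^+_{\rm ph}-3m=0$), which the paper notes separately. The paper's approach has the minor advantage of being more mechanical once one commits to computing the shifted coefficients, whereas yours relies on spotting the fortuitous factorisation of $Q'$.
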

\begin{proof}
Differentiating (\ref{rn-f1}) yields
\begin{equation}
\frac{\rmd^2f}{\rmd r^{\ast2}}\left(r(r^\ast)\right)=\frac{2m}{r^5\sqrt{1-\frac{2m}{r}+\frac{q^2}{r^2}}}P(r;m,q)\,,
\label{rn-f2}
\end{equation}
with the polynomial function of $r$ and the two parameters $m,\ q$,
\begin{equation}
P(r;m,q)=r^3-\frac{3}{2m}(m^2+q^2)r^2+3q^2r-\frac{q^4}{m}\,.
\label{P}
\end{equation}
Now in order to check that $\tfrac{\rmd^2f}{\rmd r^{\ast2}}\geq 0$ for $r^+_{\rm ph}\leq r < \infty$, we first observe in (\ref{rn-f2}) that the right-hand side is positive as $r\rightarrow \infty$, and it remains to be shown that the largest root of (\ref{P}) is at most $r^+_{\rm ph}$. To this end, consider the shifted polynomial $P(r+r^+_{\rm ph};m,q)$. Then it turns out that all monomial coefficients for $m^2>\frac{8}{9}q^2$ are positive. Therefore, by Descartes' Rule of Signs, there is no sign change and thus no positive root of the shifted polynomial, as required. Furthermore, direct computation shows that the limiting case is obtained for
\[
P\Big(r^+_{\rm ph}; m^2=\tfrac{8}{9}q^2\Big)=0.
\]
\end{proof}
Hence, we conclude that Dido's theorem also applies to the outer photon sphere of Reissner-Nordstr\"{o}m.

\subsubsection{Tolman-Oppenheimer-Volkoff}
In the final part of this section, we turn to smooth mass distributions that give rise to static spherically symmetric solutions of general relativity, rather than black hole solutions. Although such models may, in fact, not possess a photon sphere at all, it is instructive to see what the two conditions of Proposition \ref{prop-bm} mean physically in this setting.
\newline
\indent Starting with a general static spherically symmetric spacetime metric,
\[
g=-e^{2A}\rmd t\otimes \rmd t+e^{2B}\rmd r\otimes\rmd r+r^2\left(\rmd \theta\otimes \rmd \theta+\sin^2\theta \rmd \varphi\otimes \rmd \varphi\right)\,,
\]
with functions $A=A(r), \ B=B(r)$, consider a spatial mass density $\rho=\rho(r)$ and pressure $p=p(r)$, which are defined in terms of components of the energy-momentum tensor. Now the cumulative mass parameter of the model is defined by
\[
\mu(r)=4\pi G\int_0^r\rho(\bar{r})\bar{r}^2\rmd \bar{r}\,,
\]
and Einstein's field equations yield
\[
e^{-2B}=1-\frac{2\mu}{r}\,, \qquad \frac{\rmd A}{\rmd r}=\frac{1}{1-\frac{2\mu}{r}}\left(\frac{\mu}{r^2}+4\pi Gpr\right)\,,
\]
as well as the Tolman-Oppenheimer-Volkoff equation of hydrostatic equilibrium,
\[
\frac{\rmd p}{\rmd r}=-\frac{(\rho+p)(\mu+4\pi Gpr^3)}{r^2\left(1-\frac{2\mu}{r}\right)}\,.
\]
Moreover, the corresponding optical geometry has the following line element in the equatorial plane (cf. \cite{GW} for a discussion of lensing properties) whence, again, we can compare with (\ref{bm}), 
\[
\rmd t^2=e^{2B-2A}\rmd r^2+e^{-2A}r^2\rmd \varphi^2=\rmd r^{\ast2}+f^2(r^\ast)\Omega^2_1\,,
\]
and read off
\begin{equation}
\frac{\rmd f}{\rmd r^\ast}\left(r(r^\ast)\right)=e^{-B}\left(1-r\frac{\rmd A}{\rmd r}\right)\,.
\label{tov-f1}
\end{equation}
Thus, the first condition (\ref{bm-cond1}) of Proposition \ref{prop-bm} becomes
\[
0\leq \frac{\rmd f}{\rmd r^*}<1\,: \quad 1-e^B< r\frac{\rmd A}{\rmd r}\leq 1\,,
\]
which can be recast in terms of upper and lower bounds on the pressure gradient,
\begin{equation}
1-\frac{1}{\sqrt{1-\frac{2\mu}{r}}}<\frac{-r}{\rho+p}\frac{\rmd p}{\rmd r}\leq 1\,.
\label{tov-cond1}
\end{equation}
Differentiating (\ref{tov-f1}) yields
\[
\frac{\rmd^2 f}{\rmd r^{*2}}\left(r(r^\ast)\right)=e^{-2B+A}\left(r\frac{\rmd A}{\rmd r}\frac{\rmd B}{\rmd r}-\left(\frac{\rmd A}{\rmd r}+\frac{\rmd B}{\rmd r}\right)-r\frac{\rmd^2A}{\rmd r^2}\right)\,,
\]
which is related to the Gaussian curvature (\ref{gauss1}) of the equatorial plane in the optical geometry,
\begin{align*}
K&=-\frac{1}{f}\frac{\rmd ^2f}{\rmd r^{*2}}\\
&=-\frac{2\mu e^{2A-2B}}{r^3\left(1-\frac{2\mu}{r}\right)^2}\left[1-\frac{3\mu}{2r}-4\pi Gr^3\left(\frac{\rho+p-2\pi Gp^2r^2}{\mu}-\frac{2\rho+3p}{r}\right)\right].
\end{align*}
Therefore, the second condition (\ref{bm-cond2}) can now be expressed as 
\begin{equation}
\frac{\rmd^2 f}{\rmd r^{*2}}\geq 0\,: \quad 1-\frac{3\mu}{2r}-4\pi Gr^3\left(\frac{\rho+p-2\pi Gp^2r^2}{\mu}-\frac{2\rho+3p}{r}\right)\geq 0
\label{tov-cond2}
\end{equation}
in terms of density, pressure and the mass parameter. Thus, (\ref{tov-cond1}) and (\ref{tov-cond2}) provide physical conditions for Dido's theorem to apply, although we shall not pursue a more detailed discussion here. Instead, we proceed beyond the limiting case of Dido's theorem and provide a derivation of an isoperimetric inequality applicable to gravitational lensing in optical geometry, starting with a brief review of curve shortening flow.

\section{An isoperimetric inequality}
\label{sec-ineq}
\subsection{Curve shortening flow}
Suppose that $(\mathcal{S},\gb)$ is a Riemannian surface representing the optical geometry of a static spherically symmetric spacetime.
\begin{definition}
We say a simple closed geodesic $\gamma:\mathbb{S}^1\to\mathcal{S}$ is the boundary of a convexly foliated infinity, provided $\gamma=\partial \Sigma$ for some set $\Sigma\subset \mathcal{S}$, and $\mathcal{S}-\Sigma\cong\mathbb{R}^2-B(1)$. On $\mathcal{S}-\Sigma$ we have,
$$\gb = \rmd r\otimes \rmd r+r^2\rmd\varphi\otimes \rmd\varphi + h$$
with components satisfying $h_{\varphi i}=\mathcal{O}(r), h_{\varphi i,j} = \mathcal{O}(1)$. 
\end{definition}
The reason for Definition 1 is clarified by the following lemma.
\begin{lemma}
The coordinate curves parametrized by $\varphi$ are convex for sufficiently large $r$.
\begin{proof}
It is an easy exercise to show that the unit normal to the coordinate curves $\varphi\mapsto(r_0,\varphi)$ \textit{pointing away from infinity} is given by 
$$N =\frac{1}{\sqrt{\det \gb}}\left(-\gb_{\varphi\varphi}\partial_r+\frac{\gb_{r\varphi}}{\sqrt{\gb_{\varphi\varphi}}}\partial_\varphi\right)\,.$$
It therefore follows that
\begin{align*}
\gb\left( \nabla_{\frac{\partial_\varphi}{\sqrt{\gb_{\varphi\varphi}}}}\frac{\partial_\varphi}{\sqrt{\gb_{\varphi\varphi}}},N\right) &= \frac{-1}{\sqrt{\gb_{\varphi\varphi}\det \gb}}\left(\gb\left( \nabla_{\partial_\varphi}\partial_{\varphi}, \partial_r\right)-\frac{\gb_{r\varphi}}{\gb_{\varphi\varphi}}\gb\left( \nabla_{\partial_\varphi}\partial_\varphi,\partial_\varphi\right)\right)\\
&=\frac{-1}{\sqrt{\gb_{\varphi\varphi}\det\gb}}\Big(\gb_{r\varphi,\varphi}-\frac12g_{\varphi\varphi,r}-\frac12\frac{\gb_{\varphi\varphi,\theta}}{\gb_{\varphi\varphi}} \gb_{r\varphi}\Big)\\
&=\frac{1}{\sqrt{\det \gb}}\left(1+\mathcal{O}\left(\frac{1}{r^2}\right)\right)\,.
\end{align*}
This is clearly positive for sufficiently large $r$.
\end{proof}
\end{lemma}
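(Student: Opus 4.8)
The plan is to compute the geodesic curvature of a coordinate curve $\varphi\mapsto(r_0,\varphi)$ with respect to the unit normal pointing away from infinity, and to show that the asymptotic conditions on $h$ in Definition 1 force this curvature to be strictly positive once $r_0$ is large; this is exactly the assertion of convexity. The whole argument reduces to the Koszul formula together with bookkeeping of the stated decay rates.

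First I would fix the unit tangent $T=\partial_\varphi/\sqrt{\gb_{\varphi\varphi}}$ and produce a unit normal. Since the vector $-\gb_{\varphi\varphi}\partial_r+\gb_{r\varphi}\partial_\varphi$ is $\gb$-orthogonal to $\partial_\varphi$ and has squared norm $\gb_{\varphi\varphi}\det\gb$, the unit normal \emph{pointing toward decreasing $r$}, i.e.\ toward $\Sigma$, is $N=(\gb_{\varphi\varphi}\det\gb)^{-1/2}\bigl(-\gb_{\varphi\varphi}\partial_r+\gb_{r\varphi}\partial_\varphi\bigr)$, whose $\partial_r$-component is manifestly negative. Because $N\perp\partial_\varphi$, rescaling $\partial_\varphi$ to $T$ only contributes terms proportional to $\partial_\varphi$, which die in the pairing, so $\gb(\nabla_TT,N)=\gb_{\varphi\varphi}^{-1}\gb(\nabla_{\partial_\varphi}\partial_\varphi,N)$; and the Koszul formula gives $\gb(\nabla_{\partial_\varphi}\partial_\varphi,\partial_r)=h_{r\varphi,\varphi}-\tfrac12\,\partial_r\gb_{\varphi\varphi}$ and $\gb(\nabla_{\partial_\varphi}\partial_\varphi,\partial_\varphi)=\tfrac12 h_{\varphi\varphi,\varphi}$.

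Finally I would substitute the asymptotics of Definition 1: $\gb_{rr}=1$, $\gb_{r\varphi}=h_{r\varphi}=\mathcal{O}(r)$, $\gb_{\varphi\varphi}=r^2+h_{\varphi\varphi}$ with all $h_{\varphi i,j}=\mathcal{O}(1)$. Then $h_{r\varphi,\varphi}-\tfrac12\partial_r\gb_{\varphi\varphi}=-r+\mathcal{O}(1)$, the correction weighted by $\gb_{r\varphi}/\gb_{\varphi\varphi}=\mathcal{O}(1/r)$ is negligible, and hence
\[
\gb\bigl(\nabla_TT,N\bigr)=\frac{r+\mathcal{O}(1)}{\sqrt{\gb_{\varphi\varphi}\,\det\gb}}=\frac{1}{\sqrt{\det\gb}}\Bigl(1+\mathcal{O}\bigl(\tfrac1r\bigr)\Bigr)>0
\]
for $r$ large, which is the claim. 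The step to watch — and where a careless attempt would go wrong — is the orientation: the notion of convexity needed for the later curve-shortening barrier is positivity of $\gb(\nabla_TT,N)$ for the normal pointing \emph{away} from infinity, so getting that sign right is the whole point. The one other thing to verify is that the potentially $\mathcal{O}(r^2)$ term $\gb_{r\varphi}^2$ in $\det\gb=\gb_{rr}\gb_{\varphi\varphi}-\gb_{r\varphi}^2$ cannot spoil the sign; but $\gb$ is Riemannian, so $\det\gb>0$ and the estimate goes through unchanged. Beyond that the lemma is a routine computation.
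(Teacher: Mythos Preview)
Your proposal is correct and follows essentially the same route as the paper: write down the inward-pointing unit normal to the coordinate circles, compute the geodesic curvature $\gb(\nabla_TT,N)$ via the Koszul formula, and then invoke the decay hypotheses on $h$ to obtain positivity for large $r$. Your normalization of $N$ is in fact cleaner than the paper's, and your added remarks on orientation and on $\det\gb>0$ are apt; the only cosmetic difference is your $\mathcal{O}(1/r)$ remainder versus the paper's $\mathcal{O}(1/r^2)$, which is immaterial to the conclusion.
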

\begin{definition}
Within a convexly foliated infinity, we say a closed geodesic $\gamma:\mathbb{S}^1\to\mathcal{S}$ is called outermost whenever a closed geodesic $\tilde{\gamma}:\mathbb{S}^1\to\mathcal{S}$ satisfying $r\circ\tilde{\gamma}\geq r\circ \gamma$ implies $\gamma=\tilde{\gamma}$.
\end{definition}
We refer the reader to \cite{A1,A2,G1,G2} for an in-depth study of curve shortening and highlight in the following the main facts needed in our analysis. Curve shortening is given by the flow $C:\mathbb{S}^1\times I \to \mathcal{S}$, $I\subseteq \mathbb{R}$, defined by
$$\frac{\p C_s}{\p s} = \kappa N\,,$$
where $\kappa$ is the geodesic curvature of the curves $C_s$ and $N$ is the unit normal vector (density) field. This is the gradient flow for the length functional maximizing the decrease in length: 
\begin{equation}
\frac{\rmd |C_s|}{\rmd s} = -\int_{C_s}\kappa^2\rmd t\,,
\label{length}
\end{equation}
and the area change with respect to $C_s$ satisfies
\begin{equation}
\frac{\rmd |A_s|}{\rmd s}= -\int_{C_s} \kappa \rmd t\,.
\label{area}
\end{equation}
For any simple closed embedded curve $C_0$ a maximal solution to curve shortening exists on a time interval $0\leq s<s_{\rm max}$. 
\begin{proposition}
For $(\mathcal{S},\gb)$ such that the convex hull of any compact set is compact, if, $s_{\rm max}<\infty$, $C_s$ converges to a point. If, $s_{\rm max}=\infty$, then any tangential derivative of the curvature of $C_s$ satisfies:
$$\lim_{s\to\infty}\sup_{C_s}|\kappa^{(n)}(s)|=0\,.$$
Moreover, any sequence $s_i\to\infty$ has a subsequence $s_{i_j}$ for which $C_{s_{i_j}}$ converges to some geodesic of $(\mathcal{S},\gb)$. In particular, if $(\mathcal{S},\gb)$ has isolated geodesics, then either $C_s$ converges to a point or a geodesic.
\label{prop-csf}
\end{proposition}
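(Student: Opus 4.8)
The plan is to assemble this from standard curve-shortening results (Gage--Hamilton, Grayson, and Angenent, references \cite{A1,A2,G1,G2}), adapted to the noncompact surface $(\mathcal{S},\gb)$ under the stated convex-hull hypothesis. First I would observe that the convex-hull condition plays the role that convexity of $\mathcal{S}$ plays in the classical theory: it guarantees that a curve-shortening flow starting from a simple closed embedded $C_0$ remains inside a fixed compact set for all $s$, since the flow is contained in the convex hull of $C_0$ by the avoidance/comparison principle. This confinement is what lets us quote the interior estimates of \cite{A1,G1} verbatim, because all curvature and higher-derivative bounds are local and the flow never escapes to infinity.

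For the dichotomy, I would argue as follows. If $s_{\max}<\infty$, the standard obstruction result says the flow can only cease to exist if the curvature blows up; but on a \emph{compact} region (here, the convex hull of $C_0$) Grayson's theorem \cite{G1} — together with the fact that an embedded curve stays embedded under the flow — forces the curve to shrink to a round point rather than develop a singularity of any other type, so $C_s$ converges to a point. If $s_{\max}=\infty$, then by the length monotonicity \eqref{length} the total integral $\int_0^\infty\!\!\int_{C_s}\kappa^2\,\rmd t\,\rmd s = |C_0|-\lim_{s\to\infty}|C_s|$ is finite, hence $\int_{C_s}\kappa^2\,\rmd t\to 0$ along a sequence $s_i\to\infty$; combining this with the uniform higher-order estimates (which hold because the flow is confined to a compact set and $s$ is large) upgrades the $L^2$-smallness to $\sup_{C_s}|\kappa^{(n)}(s)|\to 0$ for every $n$, along the full flow. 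Given these uniform $C^\infty$ bounds, Arzelà--Ascoli extracts from any $s_i\to\infty$ a subsequence $s_{i_j}$ with $C_{s_{i_j}}\to C_\infty$ in $C^\infty$, and the limit has $\kappa\equiv 0$, i.e. $C_\infty$ is a closed geodesic of $(\mathcal{S},\gb)$. The final sentence is then immediate: if geodesics are isolated, a connectedness/continuity argument on the parameter $s$ shows the full flow (not merely a subsequence) converges, since the limit set is connected and contained in a discrete set, so it is a single geodesic — unless $s_{\max}<\infty$, in which case we are back in the first case and $C_s$ converges to a point.

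The main obstacle I anticipate is not any single estimate but the bookkeeping needed to import the Euclidean/compact-ambient results into this noncompact Riemannian setting cleanly: one must check that the convex-hull hypothesis genuinely confines the flow (which requires a maximum-principle/avoidance argument comparing $C_s$ against the boundary of a large geodesic ball or the convex hull itself), and that Grayson's ``shrinks to a round point'' conclusion survives on a general surface with the curvature bounds available here. Once confinement is established, everything else is a matter of quoting \cite{A1,A2,G1,G2} and the two evolution identities \eqref{length}, \eqref{area}. I would therefore structure the write-up as: (i) confinement lemma via comparison; (ii) the $s_{\max}<\infty$ case via Grayson; (iii) the $s_{\max}=\infty$ case via \eqref{length} plus interior estimates and Arzelà--Ascoli; (iv) the isolated-geodesics refinement via connectedness of the limit set.
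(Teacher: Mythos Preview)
Your outline is essentially correct, but you should be aware that the paper does not prove this proposition at all: it is stated without proof as a summary of known curve-shortening facts, with the sentence ``We refer the reader to \cite{A1,A2,G1,G2} for an in-depth study of curve shortening and highlight in the following the main facts needed in our analysis'' immediately preceding it. So there is no ``paper's own proof'' to compare against---the proposition is quoted from the literature, chiefly Grayson's surface paper \cite{G2}.

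That said, your sketch is a faithful reconstruction of how those results are assembled. The confinement step (convex-hull hypothesis $\Rightarrow$ flow stays in a compact set) is exactly the point of that hypothesis, and is what allows Grayson's analysis on surfaces to apply. One small correction: for the $s_{\max}<\infty$ case you cite \cite{G1}, but that paper treats plane curves; the relevant result on a general surface is in \cite{G2}. Otherwise your steps (ii)--(iv) match the logic of \cite{G2}: finite-time blowup forces collapse to a point, infinite-time existence plus the length identity gives $\int_{C_s}\kappa^2\,\rmd t\to 0$, parabolic estimates bootstrap this to $\sup|\kappa^{(n)}|\to 0$, Arzel\`a--Ascoli yields subsequential geodesic limits, and isolatedness of geodesics upgrades subsequential to full convergence via connectedness of the $\omega$-limit set.
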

\begin{proposition}\cite{H}(Avoidance Principle)
Given any two disjoint curves $C_0$, $\tilde{C}_0$. Under curve shortening the evolving curves $C_s$, $\tilde{C}_s$ remain disjoint throughout the flow.
\end{proposition}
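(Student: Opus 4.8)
The plan is to argue by contradiction, reducing the statement to the strong maximum principle for a scalar linear parabolic equation near a hypothetical first point of contact. Suppose the conclusion fails. As long as both flows exist they consist of smooth embedded compact curves, so $s\mapsto\mathrm{dist}_{\gb}(C_s,\tilde C_s)$ is continuous and it is positive at $s=0$; let $s_0>0$ be the first time it vanishes and let $p$ be a point at which $C_{s_0}$ and $\tilde C_{s_0}$ meet. Since $C_s\cap\tilde C_s=\emptyset$ for $s<s_0$, the two curves cannot cross transversally at $p$; hence they are tangent there, sharing a common tangent line $\ell$, and near $p$ they do not intersect each other.

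Next I would pass to a local graphical description. Choose coordinates $(x,y)$ on $\mathcal{S}$ near $p$ with $p$ at the origin and the $x$-axis along $\ell$. By smoothness of the flow and the tangency at $p$, there are $\delta>0$ and $\varepsilon>0$ such that on $(-\delta,\delta)\times(s_0-\varepsilon,s_0]$ the relevant arcs of $C_s$ and $\tilde C_s$ are graphs $y=f(x,s)$ and $y=\tilde f(x,s)$. In these coordinates the flow $\p_s C=\kappa N$ translates into a scalar quasilinear parabolic equation
$$\p_s f = a(x,f,\p_x f)\,\p_x^2 f + b(x,f,\p_x f)\,,$$
with smooth coefficients determined by $\gb$ and with $a$ bounded below by a positive constant on the compact region in question; the function $\tilde f$ satisfies the same equation.

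Then I would set $w=f-\tilde f$ and subtract the two equations. Expressing the difference of the nonlinear terms by the mean value theorem, $w$ solves a \emph{linear} uniformly parabolic equation
$$\p_s w = \alpha(x,s)\,\p_x^2 w + \beta(x,s)\,\p_x w + \gamma(x,s)\,w\,,$$
with bounded coefficients and $\alpha>0$. For each $s<s_0$ the graphs $f(\cdot,s),\tilde f(\cdot,s)$ are disjoint over $(-\delta,\delta)$, so $w(\cdot,s)$ has a fixed sign there, and by continuity this sign is the same for all $s<s_0$; after relabelling we may assume $w\geq0$ on the whole parabolic neighbourhood. But $w(0,s_0)=0$ realizes an interior-in-space minimum at the final time, and (absorbing the zeroth-order term via $w=e^{\lambda s}\hat w$ if needed) the strong maximum principle forces $w\equiv0$ for all $s\leq s_0$ in the connected component, contradicting $w>0$ for $s<s_0$. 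Hence no first contact time exists and the evolving curves remain disjoint.

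The heart of the matter is the maximum-principle step, which is standard once the problem has been localized; the real work is the bookkeeping that makes the localization legitimate --- checking that the coefficients $a,b$ built from the metric $\gb$ are smooth and uniformly parabolic on the relevant compact space-time set, and that the graphical representation genuinely persists on a backward time interval (both of which follow from smoothness of $\gb$ and interior parabolic estimates for curve shortening). An alternative is to differentiate $\mathrm{dist}_{\gb}(C_s,\tilde C_s)$ directly and show it is a viscosity supersolution of an ordinary differential inequality with no finite zero; this also works, but requires tracking the contribution of the ambient Gauss curvature along the connecting geodesic, so the graphical argument above is the cleaner route.
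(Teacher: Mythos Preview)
Your argument is correct and follows the same strategy the paper sketches: posit a first contact time, note that the curves must touch tangentially there, and derive a contradiction via a parabolic comparison. The paper itself does not prove the proposition but only offers a one-paragraph heuristic (phrased as a curvature comparison at the touching point, so that ``running the flow backwards'' would force an earlier intersection) and defers to Huisken for the result; your graphical-coordinates reduction to the strong maximum principle is the standard way to make that heuristic rigorous, and the distance-function alternative you mention at the end is in fact closer to Huisken's original argument.
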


The Avoidance Principle is a very useful property that holds, roughly speaking, for the following reason. If any two curves evolving under curve shortening was to touch tangentially at an instant of time, then at the point of touching, one curvature would have to be greater than or equal the other. This means the two curves have to intersect if we run the flow parameter backwards. Therefore, a first instance of touching is avoided between two initially disjoint curves flowing under curve shortening.
\begin{proposition}\cite{G2}(Corollary 2.6)
The number of inflection points on the curve does not increase with time.
\label{prop-csf2}
\end{proposition}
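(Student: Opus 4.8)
The plan is to recast the claim as a statement about sign changes of a scalar function and then invoke the classical Sturmian theory for parabolic equations on the circle. An inflection point of a closed curve $C_s$ is exactly a point at which the geodesic curvature $\kappa$, regarded as a function on $\mathbb{S}^1$, changes sign; so the number of inflection points equals the number $Z(s)$ of sign changes of $\kappa(\cdot,s)$, and the assertion is that $Z$ is non-increasing in $s$.

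First I would derive the evolution equation for $\kappa$ under curve shortening. Parametrising $C_s$ by a fixed $\varphi\in\mathbb{S}^1$ and writing $v=|\p_\varphi C_s|$ for the speed, the flow equation $\p_s C_s=\kappa N$ gives $\p_s v=-\kappa^2 v$, hence the commutation rule $[\p_s,\p_t]=\kappa^2\p_t$ between the flow derivative and the arc-length derivative $\p_t=v^{-1}\p_\varphi$. Combining this with the Frenet equations and the Gauss equation of the ambient surface $(\mathcal{S},\gb)$ yields, by a standard computation,
\[
\p_s\kappa=\p_t^2\kappa+\kappa^3+K\kappa\,,
\]
where $K$ is the Gaussian curvature of $(\mathcal{S},\gb)$ along $C_s$. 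Written in the fixed parameter $\varphi$ this is a uniformly parabolic equation
\[
\p_s\kappa=v^{-2}\p_\varphi^2\kappa+b(\varphi,s)\,\p_\varphi\kappa+c(\varphi,s)\,\kappa\,,\qquad c=\kappa^2+K\,,
\]
with periodic boundary conditions. On any compact subinterval of the maximal existence time the flow stays in a compact region of $\mathcal{S}$, so $K$ and its derivatives are bounded there, while $v$ and $\kappa$ remain smooth and bounded; hence $v^{-2},b,c$ are smooth and bounded, with $v^{-2}$ bounded below.

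Then I would apply the Sturmian zero-counting theorem (Sturm; Angenent, \emph{The zero set of a solution of a parabolic equation}): for a solution $u\not\equiv 0$ of a linear parabolic equation of the above form on $\mathbb{S}^1$, the number of sign changes of $u(\cdot,s)$ is finite for $s>0$, is non-increasing in $s$, and decreases strictly at any instant at which $u(\cdot,s)$ acquires a multiple zero. Taking $u=\kappa$ — where the term $c=\kappa^2+K$ is simply treated as a known coefficient along the already-constructed flow — shows that $Z(s)$ is non-increasing, which is the claim; this recovers the cited Corollary 2.6. The degenerate case $\kappa\equiv 0$, i.e. $C_s$ a geodesic fixed by the flow, is trivial.

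The step demanding the most care is verifying the hypotheses of the Sturmian theorem, in particular the boundedness and regularity of $c=\kappa^2+K$ on compact subintervals of $[0,s_{\mathrm{max}})$. This is precisely where the compactness assumption on $(\mathcal{S},\gb)$ in Proposition \ref{prop-csf}, together with the interior curvature estimates for curve shortening, enters: they ensure the flow neither escapes to infinity nor develops unbounded curvature before $s_{\mathrm{max}}$, so the linear theory applies on each such subinterval, and an integer-valued function that is non-increasing on every compact subinterval is non-increasing on all of $[0,s_{\mathrm{max}})$. Beyond invoking this standard parabolic machinery, no genuinely new difficulty arises.
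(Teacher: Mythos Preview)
The paper does not supply a proof of this proposition; it is simply quoted as Corollary~2.6 of Grayson~\cite{G2}. Your sketch---deriving the evolution equation $\p_s\kappa=\p_t^2\kappa+\kappa^3+K\kappa$ and then invoking the Sturmian zero-counting theorem for linear parabolic equations on $\mathbb{S}^1$---is exactly the standard argument underlying Grayson's result (and is made precise in Angenent~\cite{A1,A2}), so your approach is correct and aligned with the cited source.
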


\subsection{Application to optical geometry}
\begin{lemma}
Assume $\gamma:\mathbb{S}^1\to\mathcal{S}$ bounds a convexly foliated infinity. Then, if $\gamma$ is outermost, all closed geodesics are bounded from infinity by $\gamma$.
\begin{proof}
Assume, up to a possible diffeomorphism of $\mathbb{S}^1$, $r\circ\tilde{\gamma}(\varphi) \geq r\circ \gamma(\varphi)$ for some $\varphi\in\mathbb{S}^1$. For sufficiently large $r$, we can choose a convex curve $\varphi\mapsto(r,\varphi)$ to initiate curve shortening, namely $C_s$. Since both $\gamma_s\equiv \gamma, \tilde{\gamma}_s\equiv \tilde{\gamma}$ under curve shortening we have, by the Avoidance Principle and Proposition \ref{prop-csf}, that some subsequence $C_{s_i}$ converges to a geodesic $C_\infty$ with $r\circ C_\infty\geq \max\{r\circ\tilde{\gamma},r\circ \gamma\}$. Therefore, $C_\infty = \gamma$, and $r\circ\tilde{\gamma}(\varphi)=r\circ \gamma(\varphi)$.
\end{proof}
\end{lemma}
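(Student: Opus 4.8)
The plan is to argue by contradiction using curve shortening flow, exploiting two facts: a closed geodesic is a stationary solution of the flow, since its geodesic curvature vanishes identically, and by the Avoidance Principle a curve evolving under the flow can never cross a fixed one. So suppose, for contradiction, that some closed geodesic $\tilde{\gamma}:\mathbb{S}^1\to\mathcal{S}$ is \emph{not} bounded from infinity by $\gamma$. Then, after a reparametrisation of $\mathbb{S}^1$, there is an angle $\varphi_0$ with $r\circ\tilde{\gamma}(\varphi_0)\ge r\circ\gamma(\varphi_0)$ -- if no such $\varphi_0$ existed, $\tilde\gamma$ would already lie entirely in $\Sigma$ and there would be nothing to prove.

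The next step is to set up the flow. Since $\gamma$ and $\tilde\gamma$ are compact, I would fix a radius $r_0$ large enough that the coordinate circle $C_0=\{r=r_0\}$ is strictly convex -- which is exactly the content of the convexity lemma above, and is the reason the notion of a \emph{convexly foliated infinity} was introduced -- and also large enough that $C_0$ keeps both $\gamma$ and $\tilde\gamma$ on its compact side. Running curve shortening with initial data $C_0$ produces a family $C_s$, $0\le s<s_{\max}$. Because $\kappa\equiv 0$ along a geodesic, the curves $\gamma$ and $\tilde\gamma$ are stationary, $\gamma_s\equiv\gamma$ and $\tilde\gamma_s\equiv\tilde\gamma$; hence by the Avoidance Principle $C_s$ stays disjoint from both for every $s$ and therefore, having started on their infinity side, remains there throughout the flow -- so the topological disk bounded by $C_s$ always contains $\Sigma$ and the image of $\tilde\gamma$.

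Now I would apply Proposition \ref{prop-csf}. Since each $C_s$ keeps $\gamma$ on its compact side, $C_s$ cannot shrink to a point without first crossing $\gamma$; hence $s_{\max}=\infty$, and some subsequence $C_{s_i}$ converges to a closed geodesic $C_\infty$. From the enclosure just noted, $C_\infty$ lies weakly outside both $\gamma$ and $\tilde\gamma$, that is $r\circ C_\infty\ge\max\{r\circ\gamma,\,r\circ\tilde\gamma\}$ after matching parametrisations. In particular $C_\infty$ is a closed geodesic with $r\circ C_\infty\ge r\circ\gamma$, so the definition of \emph{outermost} forces $C_\infty=\gamma$. But then $r\circ\gamma=r\circ C_\infty\ge r\circ\tilde\gamma$, meaning $\tilde\gamma$ \emph{is} bounded from infinity by $\gamma$, contradicting our assumption; this proves the lemma.

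The step I expect to be the main obstacle is turning the words ``enclosure'' and ``$r\circ C_\infty\ge r\circ\gamma$'' into precise statements, since a closed geodesic need not be a graph over $\varphi$: one must know that curve shortening preserves embeddedness, that the disk bounded by $C_s$ depends continuously on $s$ so that $\Sigma$ and the image of $\tilde\gamma$ stay inside it, and that these properties survive in the subsequential limit, with the limiting inequality in exactly the form appearing in the definition of outermost. Controlling the flow near the asymptotic end is where the convex foliation of the convexity lemma really earns its keep -- it prevents the curves from running off to infinity and guarantees a limit curve there -- and this should be combined with the hypothesis, built into Proposition \ref{prop-csf}, that convex hulls of compact sets in $(\mathcal{S},\gb)$ are compact. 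Once these topological and compactness points are secured, everything else is a formal consequence of the Avoidance Principle and Proposition \ref{prop-csf}.
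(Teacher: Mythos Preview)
Your proposal is correct and follows essentially the same route as the paper: start curve shortening from a large convex coordinate circle, use the Avoidance Principle against the stationary geodesics $\gamma$ and $\tilde\gamma$, invoke Proposition~\ref{prop-csf} to extract a limiting geodesic $C_\infty$ lying weakly outside both, and then apply the definition of outermost to force $C_\infty=\gamma$. You supply more detail than the paper (in particular, you justify $s_{\max}=\infty$ explicitly and flag the parametrisation issue in comparing $r$-values), but the architecture is identical.
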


\begin{lemma}
If the geodesic $\gamma$ is outermost, and $C$ is a piecewise smooth closed curve satisfying $r\circ C\geq r\circ \gamma$, then $|\gamma|\leq |C|$. 
\begin{proof}
By smooth curve approximation, it suffices to assume $C$ is smooth. Therefore, flowing $C=C_0$ under curve shortening we know by the Avoidance Principle that any subsequence converging to a geodesic necessarily converges to $\gamma$. Since curve shortening decreases length, the result follows.
\end{proof}
\label{lem-csf}
\end{lemma}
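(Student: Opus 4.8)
## Proof Proposal for Lemma \ref{lem-csf}

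The plan is to reduce the piecewise-smooth case to the smooth case, then run curve shortening flow on $C$ and argue that the flow must converge (along a subsequence) to the outermost geodesic $\gamma$, using the Avoidance Principle to guarantee this. First I would note that any piecewise smooth closed curve with $r\circ C\geq r\circ\gamma$ can be approximated in length by a smooth closed curve with the same property (up to an arbitrarily small outward perturbation near the corners, which only increases $r$ slightly and changes length by an arbitrarily small amount); hence it suffices to establish the inequality $|\gamma|\leq|C|$ for smooth $C$, and then pass to the limit.

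Given smooth $C=C_0$, I would initiate curve shortening flow. Since $r\circ C_0\geq r\circ\gamma$, the curve $C_0$ is disjoint from $\gamma$ (or coincides with it, in which case there is nothing to prove), and since $\gamma$ is itself a geodesic it is stationary under the flow, $\gamma_s\equiv\gamma$. By the Avoidance Principle (Proposition on the Avoidance Principle), the evolving curves $C_s$ remain on the infinity side of $\gamma$ throughout, i.e. $r\circ C_s\geq r\circ\gamma$ is preserved in the homotopical sense that $C_s$ never crosses $\gamma$. Because $(\mathcal{S},\gb)$ is the optical geometry of a static spherically symmetric spacetime with a convexly foliated infinity, the convex-hull compactness hypothesis of Proposition \ref{prop-csf} holds (the convex outer coordinate curves of Lemma 2 bound any compact set), so Proposition \ref{prop-csf} applies: either $s_{\rm max}<\infty$ and $C_s$ shrinks to a point, or $s_{\rm max}=\infty$ and some subsequence $C_{s_i}$ converges to a closed geodesic $C_\infty$.

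The point-collapse alternative is excluded here: since $C_s$ cannot cross the geodesic $\gamma$ and $\gamma$ encloses a fixed region, the curves $C_s$ are trapped outside $\gamma$ and cannot converge to a point interior to that region; more carefully, a collapsing $C_s$ would eventually have to cross $\gamma$, contradicting the Avoidance Principle. Hence $s_{\rm max}=\infty$ and $C_{s_i}\to C_\infty$, a closed geodesic with $r\circ C_\infty\geq r\circ\gamma$ (the inequality passing to the limit from the preserved disjointness). Since $\gamma$ is outermost, Definition 2 forces $C_\infty=\gamma$. Finally, because curve shortening is the gradient flow decreasing length, $|C_{s_i}|\leq|C_0|=|C|$ for all $i$, and by lower semicontinuity of length under the convergence $C_{s_i}\to\gamma$ we get $|\gamma|\leq\liminf_i|C_{s_i}|\leq|C|$, as desired.

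I expect the main obstacle to be the careful justification that the flow does not collapse to a point and that the limiting geodesic lies on the correct side of $\gamma$ — in other words, promoting "$C_s$ and $\gamma$ stay disjoint" to "$r\circ C_\infty\geq r\circ\gamma$" and ruling out the degenerate collapse — since this is exactly where the outermost hypothesis and the convexly foliated infinity structure must be combined. The length-monotonicity and the smooth approximation steps are routine given equations (\ref{length}) and (\ref{area}) and Proposition \ref{prop-csf}.
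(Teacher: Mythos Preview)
Your proposal is correct and follows essentially the same strategy as the paper: reduce to smooth $C$ by approximation, run curve shortening, use the Avoidance Principle (with $\gamma$ stationary) to force any subsequential geodesic limit to be $\gamma$, and conclude from length-monotonicity. Your write-up is in fact more careful than the paper's terse argument, explicitly invoking Proposition~\ref{prop-csf}, ruling out point-collapse, and flagging the one genuinely delicate point (that $C$ must be homotopically nontrivial around $\gamma$ so that collapse to a point is actually excluded)---a hypothesis the paper leaves implicit.
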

Now this can be applied to gravitational lensing, for instance in the Schwarzschild optical geometry discussed previously, as illustrated schematically in Fig. \ref{fig2}. In this case, the piecewise smooth closed curve $C$ can, of course, be thought of as comprising two geodesics in the optical geometry which correspond to light rays connecting a light source and an observer. 
\\
\indent With this situation in mind, we shall now conclude this paper with a result that adapts the standard isoperimetric inequality (\ref{isoineq}) to this optical geometry context.
\begin{figure}[h]
\centering
\def\svgwidth{0.95\textwidth} 
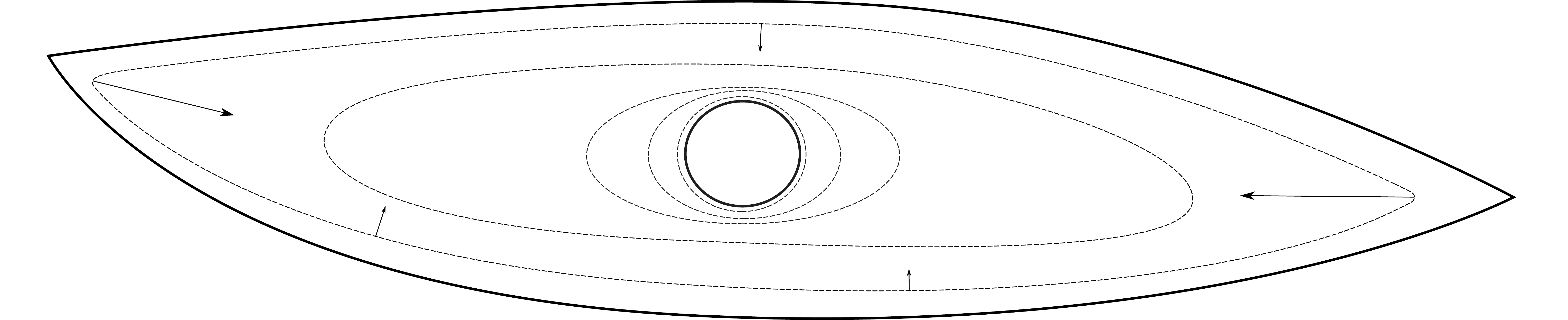
\caption{Curve shortening in optical geometry. Light source and observer (at vertices) are connected by two light rays (bold) enclosing $\gamma$ at $r=r_{\rm ph}$.}
\label{fig2}
\end{figure}

\begin{theorem}
With the hypotheses of Lemma \ref{lem-csf}, if $C$ is convex and the annulus $A$ with $\p A = C\cup \gamma$ supports Gaussian curvature satisfying $K\leq -\delta^2$, then
$$|C|^2\geq |\gamma|^2+\delta^2|A|^2.$$
\end{theorem}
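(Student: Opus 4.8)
The plan is to run curve shortening flow starting from $C_0=C$ and to monitor the functional $Q(s):=|C_s|^2-\delta^2|A_s|^2$, where $A_s$ denotes the annular region bounded by $C_s$ and $\gamma$. I would show that $Q$ is non-increasing along the flow and that $\lim_{s\to\infty}Q(s)=|\gamma|^2$; since $Q(0)=|C|^2-\delta^2|A|^2$, this gives exactly the asserted inequality. As in Lemma \ref{lem-csf}, by smooth approximation it suffices to take $C$ smooth.

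First I would pin down the behaviour of the flow. Because $\gamma$ is a closed geodesic it is stationary under curve shortening, so by the Avoidance Principle $C_s$ stays disjoint from $\gamma$ and, as $r\circ C\ge r\circ\gamma$, remains in $\{r\ge r\circ\gamma\}$; in particular $C_s$ cannot contract to a point, so $s_{\max}=\infty$ by Proposition \ref{prop-csf}. Every subsequential limit of $C_s$ is then a closed geodesic lying in $\{r\ge r\circ\gamma\}$, hence equals $\gamma$ by the outermost hypothesis, exactly as in the proof of Lemma \ref{lem-csf}; thus $|C_s|\to|\gamma|$. Since $C$ is convex, Proposition \ref{prop-csf2} keeps $C_s$ convex, so $\kappa\ge 0$ along the whole flow; consequently $C_s$ moves monotonically inward, the annuli $A_s$ are nested with $\bigcap_s A_s=\gamma$ so $|A_s|\to 0$, and the bound $K\le-\delta^2$ persists on every $A_s\subseteq A$. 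Hence $Q(s)\to|\gamma|^2$.

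It remains to show $\dot Q\le 0$. Differentiating and using (\ref{length}) and (\ref{area}),
\[
\dot Q=-2|C_s|\int_{C_s}\kappa^2\,\rmd t+2\delta^2|A_s|\int_{C_s}\kappa\,\rmd t\,.
\]
The crucial input is the Gauss-Bonnet theorem applied to $A_s$, which has Euler characteristic zero: since $\gamma$ is a geodesic and the curve shortening normal $N$ along $C_s$ points into $A_s$, one gets $\int_{C_s}\kappa\,\rmd t=-\int_{A_s}K\,\rmd A\ge\delta^2|A_s|$, using $K\le-\delta^2$. Combining this with the Cauchy-Schwarz estimate $|C_s|\int_{C_s}\kappa^2\,\rmd t\ge\big(\int_{C_s}\kappa\,\rmd t\big)^2\ge\delta^2|A_s|\int_{C_s}\kappa\,\rmd t$ (all integrands being non-negative), we obtain $\dot Q\le 0$. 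Therefore $|C|^2-\delta^2|A|^2=Q(0)\ge\lim_{s\to\infty}Q(s)=|\gamma|^2$, which is the claim.

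The step I expect to be the real obstacle is the global part of the second paragraph: ruling out finite-time extinction of $C_s$ and identifying its limit with $\gamma$, together with $|A_s|\to 0$. This is precisely where the hypotheses that $\gamma$ bounds a convexly foliated infinity and is outermost are needed, through the Avoidance Principle, Proposition \ref{prop-csf}, and the reasoning of Lemma \ref{lem-csf}; once that is secured, the monotonicity of $Q$ is the short computation above, resting only on Gauss-Bonnet and Cauchy-Schwarz.
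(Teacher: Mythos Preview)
Your proposal is correct and follows essentially the same route as the paper: run curve shortening from $C$, use Proposition \ref{prop-csf2} to keep convexity, apply Gauss--Bonnet on the annulus to relate $\int_{C_s}\kappa\,\rmd t$ to $\int_{A_s}K\,\rmd A$, combine with Cauchy--Schwarz/Jensen and the evolution formulas (\ref{length}), (\ref{area}) to get $\tfrac{\rmd}{\rmd s}\big(|C_s|^2-\delta^2|A_s|^2\big)\le 0$, then integrate. The paper phrases the last step simply as ``integrating over the flow parameter yields the result,'' whereas you spell out the endpoint behaviour $|C_s|\to|\gamma|$, $|A_s|\to0$ more explicitly; this is the only cosmetic difference.
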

\begin{proof}
We adapt an approach of Topping in \cite{T}. Namely, if $C_0$ is convex then by Proposition \ref{prop-csf2} the curve shortening flow $\{C_s\}$ remains convex for future times, bounding the annuli $A_s\subset A$. By the Gauss-Bonnet theorem and (\ref{area}),
$$\int_{A_s}K\rmd A = -\int_{C_s}\kappa \rmd t = \frac{\rmd|A_s|}{\rmd s}\leq0.$$
As a result,
\begin{align*}
-\delta^2|A_s|\frac{\rmd|A_s|}{\rmd s}&\leq \Big(\int_{A_s}K\rmd A\Big) \frac{\rmd|A_s|}{\rmd s}=\Big(\int_{C_s}\kappa \rmd t\Big)^2\\
&\leq |C_s|\int_{C_s}\kappa^2\rmd t = -|C_s|\frac{\rmd |C_s|}{\rmd s}\,,
\end{align*}
where the last line follows from Jensen's inequality for integrals and (\ref{length}). Now integrating over the flow parameter yields the result.
\end{proof}


\section*{Acknowledgements}
This material is based upon work supported by the National Science Foundation (US) under Grant No. 1641020.

\end{document}